\documentclass[smallextended]{svjour3}       % onecolumn (second format)
\smartqed  % flush right qed marks, e.g. at end of proof

\usepackage{amsmath}
\usepackage{tabularx}
\usepackage{graphicx}
\usepackage{subfigure}
\usepackage{amssymb}
\usepackage{mathptmx}
\usepackage{epstopdf}
\usepackage{color}
\usepackage{cite}
\usepackage{algorithm,algpseudocode}
\newtheorem{assumption}{Assumption}
\pdfminorversion=4

\pagenumbering{gobble}

\begin{document}
\title{Dynamic Contract Design for Systemic Cyber Risk Management of Interdependent Enterprise Networks%\thanks{Grants or other notes
%about the article that should go on the front page should be
%placed here. General acknowledgments should be placed at the end of the article.}
}
%\subtitle{Do you have a subtitle?\\ If so, write it here}

\titlerunning{Dynamic Contract Design for Systemic Cyber Risk Management}        % if too long for running head

\author{Juntao Chen         \and
        Quanyan Zhu \and Tamer Ba\c{s}ar %etc.
}

%\authorrunning{Short form of author list} % if too long for running head

\institute{J. Chen and Q. Zhu \at
              Department of Electrical and Computer Engineering, Tandon School of Engineering\\
               New York University, Brooklyn NY 11201 USA \\
              \email{\{jc6412,qz494\}@nyu.edu}           %  \\
%             \emph{Present address:} of F. Author  %  if needed
           \and
           T. Ba\c{s}ar \at
            Coordinated Science Laboratory, University of Illinois at Urbana-Champaign\\
            Urbana, IL 61801 USA\\
            \email{basar1@illinois.edu}          
}

\date{Received: date / Accepted: date}
% The correct dates will be entered by the editor

\maketitle

\begin{abstract}
The interconnectivity of cyber and physical systems and Internet of things has created ubiquitous concerns of cyber threats for enterprise system managers. It is common that the asset owners and enterprise network operators need to work with cybersecurity professionals to manage the risk by remunerating them for their efforts that are not directly observable. In this paper, we use a principal-agent framework to capture the service relationships between the two parties, i.e., the asset owner (principal) and the cyber risk manager (agent). Specifically, we consider a dynamic systemic risk management problem with asymmetric information where the principal can only observe cyber risk outcomes of the enterprise network rather than directly the efforts that the manager expends on protecting the resources. Under this information pattern, the principal aims to minimize the systemic cyber risks by designing a dynamic contract that specifies the compensation flows and the anticipated efforts of the manager by taking into account his incentives and rational behaviors. We formulate a bi-level mechanism design problem for dynamic contract design within the framework of a class of stochastic  differential games. We show that the principal has rational controllability of the systemic risk by designing an incentive compatible estimator of the agent's hidden efforts. We characterize the optimal solution by reformulating the problem as a stochastic optimal control program which can be solved using dynamic programming. We further investigate a benchmark scenario with complete information and identify conditions that yield zero information rent and lead to a new certainty equivalence principle for   principal-agent problems. Finally, case studies over networked systems are carried out to illustrate the theoretical results obtained.
\keywords{Systemic Risk \and Dynamic Contracts \and Differential Games \and Internet of Things \and Economics of Cybersecurity}
% \PACS{PACS code1 \and PACS code2 \and more}
% \subclass{MSC code1 \and MSC code2 \and more}
\end{abstract}

\section{Introduction}
Cybersecurity is a critical issue in modern enterprise networks due to the adoption of advanced technologies, e.g., Internet of things (IoT), cloud and data centers, and supervisory control and data acquisition (SCADA) system, which create abundant surfaces for cyber attacks  \cite{knowles2015survey,chen2017security,sicari2015security}. Due to the interconnections between nodes in the network, the cyber risk can propagate and escalate into systemic risks, which have been a major contributor to massive spreading of Mirai botnets, phishing messages, and ransomware, causing information breaches and financial losses.  In addition, systemic risks are highly dynamic by nature as the network faces a continuous flow of cybersecurity incidents. Hence, it becomes critical for the network and asset owner to protect resources from cyber attacks.

The complex interdependencies between nodes and fast evolution nature of threats have made it challenging to mitigate systemic risks of enterprise network and thus requires expert knowledge from cyber domains. The asset owners or system operators need to delegate tasks of risk management including security hardening and risk mitigation to security professionals. As depicted in Fig. \ref{security_delegation_pic}, the owner can be viewed as a principal who employs a security professional to fulfill tasks that include monitoring the network, patching the software and devices, and recovering machines from failures. The security professionals can be viewed as an agent whose efforts are remunerated by the principal. This principal-agent type of interaction models the service relationships between the two parties. The effort of the agent can be measured by the hours he spends on the security tasks. Moreover, the amount of allocated effort has a direct impact on the systemic cyber risk. For example, with more frequent scans on suspicious files and the Internet traffic at each node, the cyber risk becomes low and less likely to spread. An agent plays an important role in systemic risk as he can determine the amount of his effort and the way of distributing efforts on protecting nodes over the network. Hence, it is essential for the principal to incentivize the agent to distribute his resources desirably to protect the network. 

\begin{figure}[t]
  \centering
    \includegraphics[width=0.75\columnwidth]{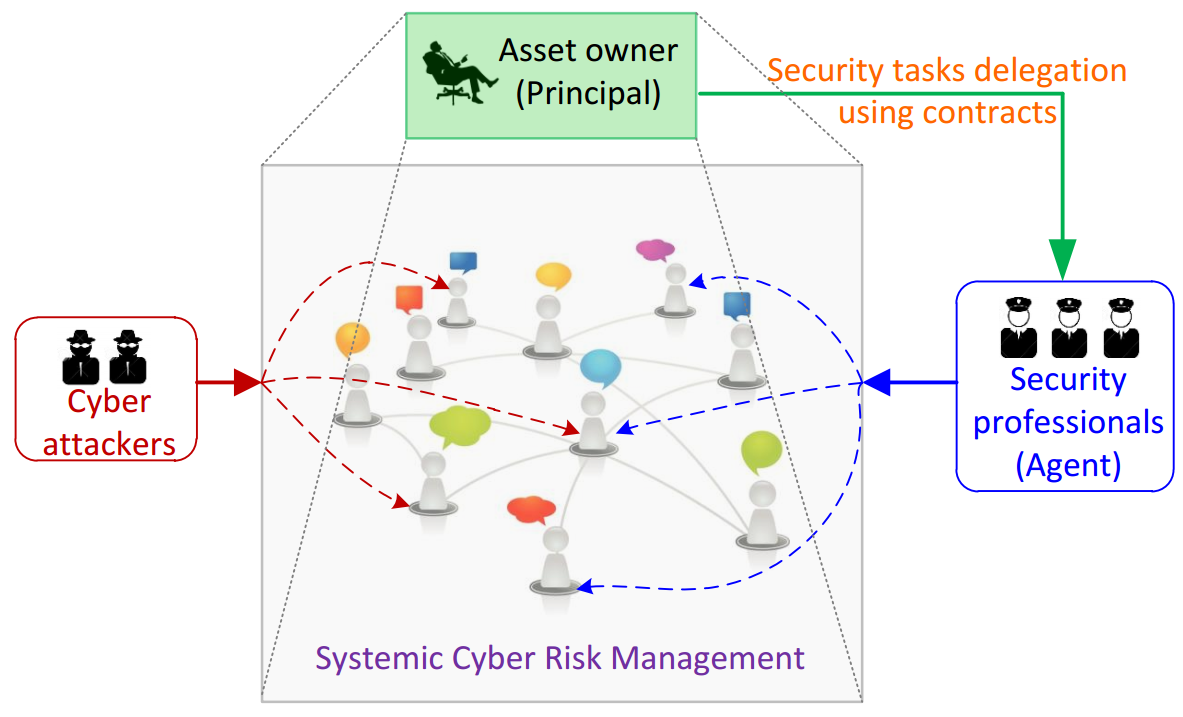}
  \caption[]{Systemic cyber risk management for enterprise network. The asset owner (principal) delegates the risk management tasks, e.g., network monitoring and software patching, to security professionals (agents) by designing a contract which specifies the remuneration schemes. The amount of remuneration is directly related to the systemic risk outcome of the network.}
  \label{security_delegation_pic}
\end{figure}

In the cyber risk management of enterprise network, one distinction is the lack of knowledge of the principal about the effort spent by the agent. The principal is only able to observe risk outcomes, e.g., the denial or failures of services and conspicuous performance degradation. Moreover, due to the randomness in the cyber network, e.g., the biased assessment of risks and the unknown attack behaviors, the cyber risk evolves under uncertainties, making it difficult for the principal to infer the exact effort of the agent from the observations.  This type of incomplete information structure is called moral hazard in contracts, under which the asset owner aims to minimize the systemic cyber risk by providing sufficient incentives to the risk manager through a dynamic contract that specifies the compensation flows and suggested effort, while the risk manager's objective is to maximize his payoff with minimum effort by responding to the agreed contract.

The dynamic principal-agent problem has an asymmetric information structure in which the risk manager determines his effort over time, while  this effort is hidden to or unobservable by the asset owner. This information structure makes the contract design a challenging decision making problem.  Conventional methods to address problems of incomplete information include information state based separation principle \cite{james1994risk,james1996partially} and belief update scheme \cite{cho1987signaling}. However, these methods cannot be directly applied to design an optimal contract for the players. To address this challenge, we develop a systematic solution methodology which includes an estimation phase, a verification phase, and a control phase. Specifically, we first anticipate the risk manager's optimal effort based on the systemic risk outcome by  designing an estimator for the principal. Then, we show that the principal has \textit{rational controllability} of the systemic risk by verifying that the estimated effort is incentive compatible. Finally, we transform the problem using decision variables that adapt to the principal's information set and obtain the solution by solving a reformulated standard stochastic control program.

The optimal dynamic mechanism design (ODMD) includes the compensation flows and the suggested effort. The designed optimal dynamic contract includes the compensations for direct cost of effort, discounted future revenue, cyber risk uncertainties, as well as incentive provisions. Furthermore, under the incentive compatible contract, the risk manager's behavior is strategically neutral in the sense that his current action depends solely on the present stage's cost. The policies of the optimal contract can be determined by solving a stochastic optimal control problem. Under mild conditions, the decision variables associated with the suggested effort and the compensation can be solved in parallel, leading to a \textit{separation principle} for dynamic mechanism design. 
As a benchmark problem for comparison, we further investigate the dynamic contract under full information where the principal can fully observe the agent's effort. In general cases, we show that there is a positive information rent quantifying the difference of principal's objective value between the contracts designed under incomplete information and full information. 
In addition, we identify conditions under which the information rent is degenerated to zero, yielding a \textit{certainty equivalence principle} in which the mechanism designs under full and asymmetric information become identical. 
For example, the hidden-action impact is absent in the linear quadratic (LQ) case where the principal achieves a perfect estimation and control of the risk manager's dynamic effort.

The incentive provided by the principal to the agent is critical for mitigating the cyber risk. Without sufficient control effort, the risk would grow and propagate over the network. Under the optimal dynamic contract, both the systemic cyber risk and adopted effort decrease over time. Moreover, the effort converges to a positive constant and the systemic risk can remain at a low level. Furthermore, a higher network connectivity requires the agent to spend more effort to reduce the systemic cyber risk. In the linear quadratic (LQ) scenario, we observe that the nodes in the cyber network have self-accountability, i.e., the amount of effort allocated on each node depends only on its risk influences on other nodes and is independent of exogenous risks coming from neighboring nodes. This observation enables large-scale implementation of distributed risk mitigation policy by determining the outer degrees of the nodes.

The contributions of this work are summarized as follows. 
\begin{itemize}
\item[1)] We formulate a dynamic mechanism design problem for systemic cyber risk management of enterprise networks under hidden-action type of incomplete information.
\item[2)] We provide a systematic methodology to characterize the optimal mechanism design by transforming the problem into a stochastic optimal control problem with compatible information structures.
\item[3)] We define the concept of ``rational controllability'' to capture the feature of indirect control of cyber risks by the principal, and identify the explicit conditions under which the designed dynamic contract is incentive compatible.
\item[4)] We identify a separation principle for dynamic contract design under mild conditions, where the estimation variable capturing the suggested risk management effort and the control variable specifying the compensation can be determined separately.
\item[5)] We reveal a certainty equivalence principle for a class of dynamic mechanism design problems where the information rent is zero, i.e., the contracts designed under asymmetric and full information cases coincide. 
\item[6)] We observe that larger enterprise network connectivity and risk dependency strength require the principal to provide more incentives to the agent. Under the optimal contract in the LQ case, the allocated effort depends on the nodes' outer degree, leading to a self-accountable and distributed risk mitigation scheme.
\end{itemize}

\subsection{Related Work}
Cybersecurity becomes a critical issue due to the large-scale deployment of smart devices and their integration with information and communication techologies (ICTs) \cite{sicari2015security,pawlick2019istrict}. Hence, security risk management is an important task which has been investigated in different research fields, such as communications and infrastructures \cite{zhu2012interference,chen2019dynamicgame}, cloud computing \cite{takabi2010security} and IoT \cite{chen2019interdependent}. The interconnections between nodes and devices make the risk management a challenge problem as the cyber risk can propogate and escalate into systemic risk \cite{fouque2013handbook}, and hence the interdependent security risk analysis is necessary \cite{chen2019optimalsecure}. Managing systemic risk is nontrivial as demonstrated in financial systems \cite{bisias2012survey}, critical infrastructures \cite{crowther2005application}, and communication networks \cite{cherdantseva2016review}.  In a network with a small number of agents, graph-theoretic methods have been widely adopted to model the strategic interactions and risk interdependencies between agents  \cite{bisias2012survey,elliott2014financial}. When the number of nodes becomes large, \cite{carmona2015mean} has proposed a mean-field game approach where a representative agent captures the system dynamics. Different from \cite{eisenberg2001systemic,acemoglu2015systemic} in minimizing the static systemic risk at equilibrium, we focus in this paper on a mechanism design problem that can reduce the systemic risks by understanding the system dynamics.

Dynamic games of incomplete or imperfect information have been studied within the context of different classes of games, such as repeated games \cite{aumann1995repeated}, differential games \cite{cardaliaguet2007differential}, and stochastic games \cite{zhu2010heterogeneous}. Many types of information structures that entail incomplete or imperfect information have been investigated in the literature, such as partial or noisy measurements of system states \cite{james1994risk,james1996partially,
hansen2004dynamic,
basar1985equilibrium,bacsar2014stochastic}, and asymmetric information for the players \cite{cardaliaguet2009continuous,gupta2014common,
gupta2016dynamic}. Approaches to control and optimization under classical information structures,  also extended to games, include the information state based separation principle \cite{james1994risk,james1996partially,
charalambous1997role}, belief updates on players' private information \cite{cho1987signaling}, generalized belief states of agents \cite{hansen2004dynamic}, and control over networks \cite{yuksel2013stochastic}.
Decision-making under nonclassical information structures has also been studied (such as \cite{srikant1992asymptotic,bansal1987stochastic,
bacsar1994optimum}), where the players are coupled through the system dynamics and/or the performance indices do not share the same information and could be memoryless. Our bi-level dynamic mechanism design problem exhibits a unique information structure in that the principal delegates the risk control tasks to the agent without observing the applied control effort, while the agent has complete information of the system, which leads to informational asymmetry.

Dynamic mechanism design has been studied broadly \cite{gershkov2014dynamic,athey2013efficient}. In \cite{cvitanic2013contract}, the authors have provided a comprehensive summary of dynamic contract design based on the stochastic maximum principles where solving forward-backward stochastic differential equations (FBSDEs) becomes necessary. Instead of controlling the output density using Girsanov transform, which has an indirect interpretation in applications \cite{schattler1993first,
 williams2015solvable}, the authors in \cite{sannikov2008continuous,biais2010large} directly control the system output and adopt an alternative approach by regarding the agent's future payoff as a variable in the stochastic control dynamics. Our approach in this paper adopts the agent's current payoff as a state variable which is different from the above discussed methods. The purpose of this work is to develop risk management solutions for networked systems using a remuneration scheme that combines intermediate and terminal compensations. We will develop a systematic solution methodology for this class of problems by capturing the systemic cyber risk dynamics and provide principles for optimal mechanism design.
 
The current work is different from the preliminary version \cite{chen2018linear} in multiple aspects. First and foremost, in \cite{chen2018linear}, the risk management policy is designed only for the LQ framework, while the current one extends the model to arbitrarily general scenarios. Thus, the analysis and derived results in this work are much more fundamental by focusing on a broader class of dynamic contract design problems. Second, we additionally investigate the dynamic contract design under full information for comparison and obtain a new certainty equivalence principle for a number of scenarios. Third, we provide comprehensive motivations for the established dynamic risk model in the problem formulation, and include discussion and illustration on the timing of events during contract design. Fourth, the introduction section is substantially expanded, including depiction of risk management for enterprise networks, background on systemic risks, and description of explicit contributions of the work. Fifth, we enrich the related work section completely by discussing more literature and highlight the differences. Sixth, we include a higher number of case studies to thoroughly illustrate the dynamic contract design principles for systemic cyber risk management in enterprise networks.

\subsection{Organization of the Paper}
The paper is organized as follows. We formulate the systemic cyber risk management problem in Section \ref{formulation}. Section \ref{analysis} analyzes the dynamic contract forms and the incentive constraints. Section \ref{principal_problem} reformulates the principal's problem and solves a linear quadratic case  explicitly. Section \ref{benchmark} presents
a complete-information benchmark scenario for comparison. Section \ref{examples} presents examples to illustrate the dynamic contract design for systemic risk management. Section \ref{conclusion} concludes the paper.

\section{Problem Formulation}\label{formulation}
This section formulates the dynamic systemic cyber risk management problem of enterprise networks under asymmetric information using a principal-agent framework, and presents an overview of the adopted methodology.

\subsection{Systemic Cyber Risk Management}
An enterprise network is comprised of a set $\mathcal{N}$ of  nodes, where $\mathcal{N}=\{1,2,...,N\}$. Due to the interdependencies among different nodes and fast changing nature of the threats, mitigating the systemic cyber risk is a challenging task which requires expertise from cybersecurity professionals. For example, to reduce the enterprise network vulnerability, it requires a constant monitoring of the Internet traffic into and out of the system, regular patching and updating of the device software, and continuous traffic scanning for intrusion detection. The principal\footnote{The principal refers to the network/asset owner, and the agent refers to the risk manager or security professional which are used interchangeably.} can delegate the risk management tasks over a time period $[0,T]$ to a professional manager.
 
 \begin{figure}[t]
  \centering
    \includegraphics[width=.65\columnwidth]{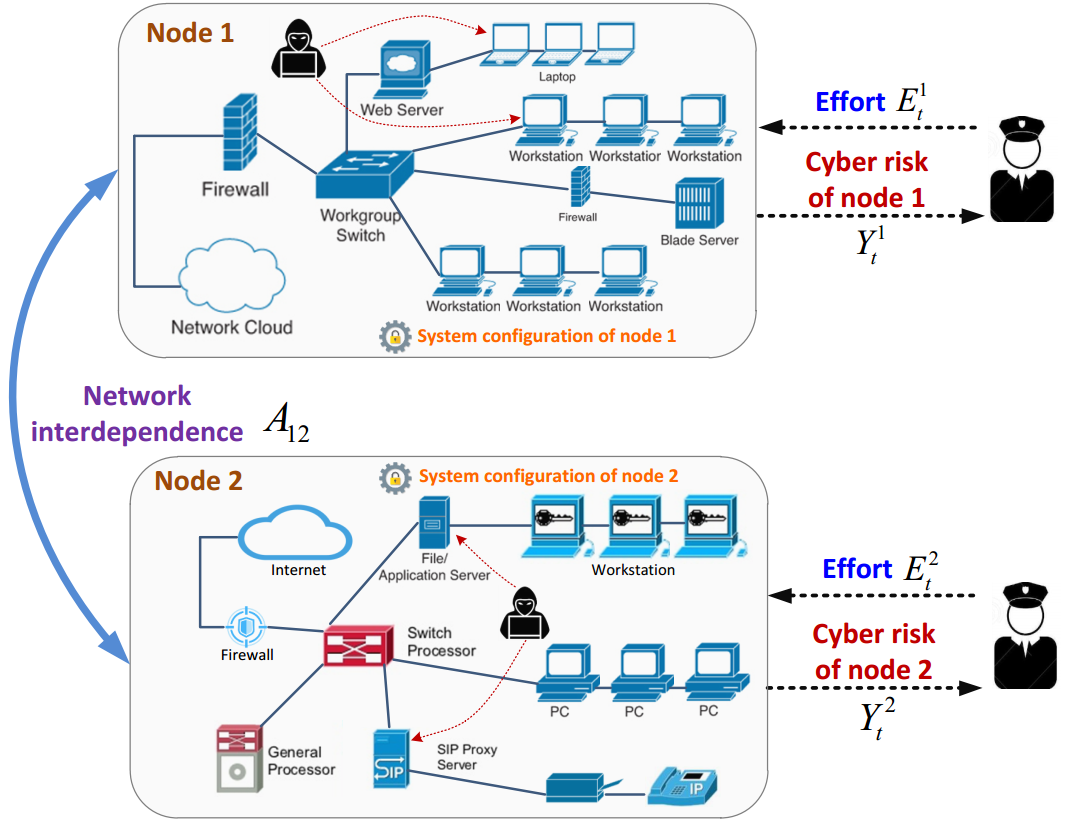}
  \caption[]{Systemic cyber risk management of an enterprise network containing two nodes. The cyber risk at node $i$ is denoted by $Y_t^i$ and the applied risk manager's effort is $E_t^i$, $i\in\{1,2\}$. The cyber risk at each node depends on its system configuration, the attack model, and the risk manager's effort. Note that the cyber risk can propagate due to the connections between nodes.}
  \label{cyber_effort_illustration}
\end{figure}

The cyber risk of each node depends on the level of compliance with security criteria, the number of vulnerabilities of the software and hardware assets, the system configurations, and the concerned threat models \cite{refsdal2015cyber}. The risk also evolves over time as the enterprise node constantly updates its software, introduces new functionalities, and interconnects with other nodes.  We let $Y^i_t \in \mathbb{R}$ be the state of node $i\in\mathcal{N}$ to capture the risk of each node that maps the system configurations at time $t$ and the threat models to the associated risk. For example, under the advanced persistent threat (APT) type of cyber attacks, one can assess the node's risk using FlipIt game model in which the defender strategically configures the system by reclaiming the control of the node with some frequencies \cite{van2013flipit}. The FlipIt game outcome yields node's risk which is the expected proportion of time that the node may be compromised by the adversary. As the nodes in the enterprise network are connected, their risks become interdependent. We use an $N\times N$-dimensional real matrix $A$ with non-negative entries to model the influence of node $i$ on node $j$, $i, j \in \mathcal{N}$. The diagonal entries in $A$ represent the strength of internal risk evolution, and the off-diagonal entries capture the risk influence magnitude between nodes \cite{nguyen2009stochastic,chen2019interdependent}. For convenience, the risk profile of the network is denoted by $Y_t = [Y_t^1, Y_t^2, \cdots, Y_t^N]$. The dynamics of the risk profile describes the evolution of the systemic risk of the whole network.

 To manage the risk profile, the risk manager can apply effort continuously over the time  period $[0,T]$. Specifically, at every time $t$, $t\in[0,T]$, the risk manager can spend effort $E_t\in\mathcal{E}\subseteq\mathbb{R}_+^N$ on the nodes that mitigates the systemic cyber risk, where $\mathcal{E}$ is a compact set. As fore-mentioned, the effort can be measured by the amount of time and effectiveness of the risk manager spent on monitoring the cyberspace of the enterprise network.  The amount of reduced risk is monotonically increasing with the allocated effort $E_t$ \cite{miura2008security}. This fact is reflected by many security practices, e.g., frequent scanning and analyzing the log files as well as timely patching the software can reduce the probability of successful cyber compromise by the adversary. Another critical factor to be considered is that the cyber risk faces uncertainties due to the randomness in the cyber network, e.g., the biased assessment and measurement of risk losses and under-modelling of random cyber threats \cite{li2010uncertainty}. Similar to \cite{carmona2015mean}, we use an $N$-dimensional standard Brownian motion $B_t$ which is defined on the complete probability space $(\Omega,\mathcal{F},\mathbb{P})$ to model the risk uncertainties on nodes. For clarity, Fig. \ref{cyber_effort_illustration} depicts an example of cyber risk management of the enterprise network containing two interdependent nodes. Each node stands for a subnetwork with its own system configuration, and the adversary can target different assets, e.g., application servers and workstations. The risk manager applies efforts $E_t^1$ and $E_t^2$ to node 1 and node 2 continuously to reduce the cyber risks $Y_t^1$ and $Y_t^2$, respectively. The interdependency between two nodes is captured by the factor $A_{12}=A_{21}$.
 
  In sum, we focus on a model of systemic cyber risk evolution described by the following stochastic differential equation (SDE):
\begin{equation}\label{cyber_risk}
\begin{split}
dY_t &= AY_tdt - E_t dt + \Sigma_t(Y_t) dB_t,\\
Y_0 &= y_0,
\end{split}
\end{equation}
where $y_0\in\mathbb{R}_{+}^N$ is a known positive vector denoting the initial systemic risk. Let $\mathbb{D}_+^{N\times N}$ denote the space of diagonal real matrices with positive elements. Then, $\Sigma_t:\mathbb{R}^N\rightarrow\mathbb{D}_+^{N\times N}$ captures the volatility of cyber risks in the network. Here, the diffusion coefficient $\Sigma_t(Y_t)$ indicates that the magnitude of uncertainty can be related to the dynamic risk of each node. We assume that the entries in $\Sigma_t(Y_t)$ are bounded, satisfying $\int_0^T \Vert\Sigma_t(Y_t)\mathbf{1}_N\Vert^2 dt \leq C_1$ almost surely, where $C_1$ is a positive constant, $\Vert \cdot \Vert$ denotes the standard Euclidean norm, and $\mathbf{1}_N$ is an $N$-dimensional vector with all ones. Furthermore, the risk manager's effort $E_t$ satisfies the condition $\int_0^T \vert E_t\vert dt \leq C_2$ almost surely, where $C_2$ is a positive constant. Since the manager can apply effort to every node through $E_t$, the systemic risk level $Y_t$ is fully manageable in the sense that more effort on each node reduces its cyber risk more significantly. Note that the model in \eqref{cyber_risk} captures the characteristics of systemic cyber risks of enterprise network, and it is also adopted in various others' risk management scenarios inluding cyber-physical industrial control systems \cite{zhu2012dynamic} and financial networks \cite{garnier2013diversification}.

\begin{figure}[t]
  \centering
  
{%
    \includegraphics[width=1\textwidth]{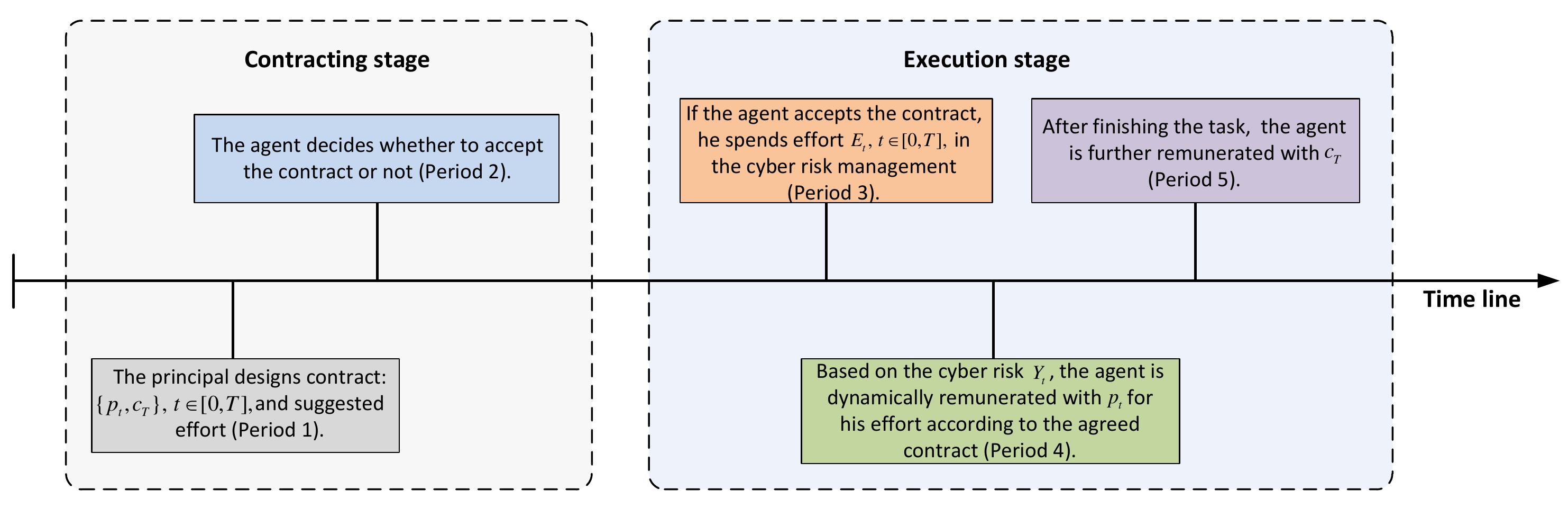}%
    \label{fra}%
  }%
  \caption{Timeline of the dynamic contract design for systemic cyber risk management.}
  \label{timeline_contract}
\end{figure}

As shown in Fig. \ref{timeline_contract}, the dynamic contract design for cyber risk management can be broken into two stages, namely the contracting stage and the execution stage. In the contracting stage, the principal first provides a dynamic contract that specifies the payment rules for the risk management to the agent and suggested/anticipated effort. Then, the agent chooses to accept the contract or not based on the provided benefits. If the agent accepts, then at the execution stage he needs to determine the adopted effort $E_t$ to reduce the systemic cyber risk. During the task, the principal observes the dynamic risk outcome $Y_t$ and pays $p_t\in\mathcal{P}\subseteq \mathbb{R}_+$ compensation to the agent according to the agreed contract, where $\mathcal{P}$ is a compact set.
 After completing the task, the agent also receives a terminal payment $c_T\in\mathbb{R}_{+}$ which finalizes the contract. 
 
Therefore, the principal needs to decide on the payment process $\{p_t\}_{0\leq t\leq T}$ as well as the final compensation $c_T$ by observing the systemic risks. Note that the effort level $E_t$, $t\in [0,T]$, is hidden information of the agent, which corresponds to the hidden-action scenario,  or moral hazard, in contract theory. This feature a reflection of the fact that the principal (asset owner) of the  enterprise network cares about the cyber risk outcome $Y_t$ rather than the implicit effort $E_t$ adopted by the risk manager. Furthermore, we denote the principal's information set by $\mathcal{Y}_t$, representing the augmented filtration generated by $\{Y_s\}_{0\leq s\leq t}$. The agent's information set is denoted by $\mathcal{A}_t$, including $\{Y_s\}_{0\leq s\leq t}$ and $\{B_s\}_{0\leq s\leq t}$. Note that for the agent, knowing $\{Y_s\}_{0\leq s\leq t}$ or $\{B_s\}_{0\leq s\leq t}$ is equivalent as he can determine one based on the other using also his effort process $\{E_s\}_{0\leq s\leq t}$.  Specifically, at time $t$, the principal's knowledge includes only the path of $Y_s$, $0\leq s\leq t$. In comparison, the agent can observe every term in the system, including the principal's information as well as the path of $B_s$, $0\leq s\leq t$. The principal observes risk outcome $Y_t$, and his goal is to reduce the systemic risk by providing incentives to the manager. Therefore, the principal has no direct control of the systemic risk, and the difficulty he faces is in designing an efficient remuneration scheme based only on the limited observable information.

Next, we rewrite the $\mathcal{Y}_T$-measurable terminal payment as $c_T = \int_0^T dc_t+c_0$, to facilitate the contract analysis,  where $c_t$ has an interpretation of cumulative payment during $[0,t]$, and $c_0$ is a constant to be determined. Note that $c_0$ is a virtual initial payment and the agent receives it not at initial time 0, but rather at the  terminal time $T$ which is captured by the term $c_T$.
The evolution of the aggregated equivalent $\mathcal{Y}_t$-measurable financial income  process $M_t$ of the cyber risk manager can be described by
\begin{equation}\label{M_t}
dM_t = dc_t + p_tdt.
\end{equation}

The cyber risk manager's cost function is:
\begin{equation}\label{J_A}
\begin{split}
J_A\left(\{E_t\}_{0\leq t\leq T};\{p_t\}_{0\leq t\leq T},c_T\right) = \mathbb{E}\int_0^T e^{-rt} f_A(t,p_t,E_t) dt + e^{-rT}h_A(M_T),
\end{split}
\end{equation}
where $\mathbb{E}$ is the expectation operator, $r\in\mathbb{R}_+$ is a discount factor, $f_A:[0,T]\times  \mathbb{R}_+\times \mathcal{E}\rightarrow\mathbb{R}$ is the running cost, and $h_A: \mathbb{R}_+\rightarrow\mathbb{R}_-$ is the terminal cost. The function $f_A$ is (implicitly) composed of two terms: the cost of spending effort $E_t$ in risk management, and the received compensation $p_t$ from the principal. Note that the final compensation $c_T$ is incorporated into $h_A(M_T)$. Assumptions we make on the two additive terms of the cost functions are as follows.

\begin{assumption}\label{Assump_1}
The running cost function $f_A(t,p_t,E_t)$ is uniformly continuous and differentiable in $p_t$ and $E_t$. Further, it is monotonically decreasing in $p_t$, and monotonically increasing and strictly convex in $E_t$. The terminal cost function $h_A(M_T)$ is a continuously differentiable, convex, and monotonic decreasing function.
\end{assumption}

The principal's cost function, on the other hand, is specified as:
\begin{equation}\label{J_P}
\begin{split}
J_P(\{p_t\}_{0\leq t\leq T},c_T) 
= \mathbb{E}\int_0^T e^{-rt} f_P(t,Y_t,p_t)dt + e^{-rT}\left( c_T+ h_P(Y_T)\right),
\end{split}
\end{equation}
where $f_P:[0,T]\times \mathbb{R}^N\times \mathcal{P}\rightarrow\mathbb{R}$ is the running cost, and $h_P: \mathbb{R}^N\rightarrow\mathbb{R}$ denotes the terminal cost. The function $f_P$ captures the instantaneous cost of dynamic systemic risk and the payment to the agent.

\begin{assumption}\label{Assump_2}
The running cost for the principal, $f_P(t,Y_t,p_t)$,  is uniformly continuous and  differentiable in $Y_t$ and $p_t$. Further, it is monotonically increasing in $p_t$ and $Y_t$. The terminal cost for the principal, $h_P(Y_T)$, is a  continuously differentiable and  monotonic increasing function.
\end{assumption}

\subsection{Dynamic Principal-Agent Model}
In cyber risk management, the principal contracts with the agent over $[0,T]$. For a given contract, the risk manager is strategic in minimizing the net cost. This rational behavior can be captured by the following definition.
 
\begin{definition}[Incentive Compatibility]
Under a given payment process $\{p_t\}_{0\leq t\leq T}$ and terminal compensation $c_T$ of the principal, the effort trajectory $\{E_t^*\}_{0\leq t\leq T}$ of the agent is incentive compatible (IC) if it optimizes the  cost function \eqref{J_A}, i.e.,
\begin{equation}\label{IC_eqn}
\begin{split}
& J_A\left(\{E_t^*\}_{0\leq t\leq T};\{p_t\}_{0\leq t\leq T},c_T\right) \\
& \leq  J_A\left(\{E_t\}_{0\leq t\leq T};\{p_t\}_{0\leq t\leq T},c_T\right),
  \forall E_t\in\mathcal{E},\ t\in[0,T].
 \end{split}
 \end{equation}
\end{definition}
 
The asset owner needs to provide sufficient incentives for the agent to fulfill the task of risk management, and this fact is captured through individual rationality as follows.

\begin{definition}[Individual Rationality]
The agent's policy is individually rational (IR) if the effort trajectory $\{E_t^*\}_{0\leq t\leq T}$ leads to satisfaction of 
\begin{equation}\label{IR_eqn}
\begin{split}
J_A\left(\{E_t^*\}_{0\leq t\leq T};\{p_t\}_{0\leq t\leq T},c_T\right) = \inf_{E_t\in\mathcal{E}} J_A\left(\{E_t\}_{0\leq t\leq T};\{p_t\}_{0\leq t\leq T},c_T\right) \leq \underline J_A, 
\end{split}
\end{equation}
where $\underline{J}_A$ is a predetermined non-positive constant.
\end{definition}
Note that the non-positiveness of $\underline J_A$ ensures the profitability of risk manager by fulfilling the risk management tasks.

We next provide precise formulations of the problems faced by the agent and the principal. Under a contract  $\{\{p_t\}_{0\leq t\leq T},c_T\}$, the agent minimizes his total cost by solving the following problem:
\begin{align*}
\mathrm{(O-A)}:\quad &\min_{E_t\in\mathcal{E},\ t\in[0,T]}\ J_A\left(\{E_t\}_{0\leq t\leq T};\{p_t\}_{0\leq t\leq T},c_T\right)\\
\mathrm{subject\ to}\  &\mathrm{the\ stochastic\ dynamics}\ \eqref{cyber_risk},\ \mathrm{and\ the\ payment\ process}\ \eqref{M_t}.
\end{align*}

By taking into account the IC and IR constraints, the principal addresses the following optimization problem:
\begin{align*}
\mathrm{(O-P)}:\ \ &\min_{p_t\in\mathcal{P},\ t\in[0,T],\ c_T}\ J_P(\{p_t\}_{0\leq t\leq T},c_T)\\
\mathrm{subject\ to}\  &\mathrm{the\ stochastic\ dynamics}\ \eqref{cyber_risk},\ \mathrm{IC}\ \eqref{IC_eqn},\ \mathrm{and\ IR}\ \eqref{IR_eqn}.
\end{align*}  
Note that the designed contract terms $\{p_t\}_{0\leq t\leq T}$ and $c_T$ should adapt to the information available to the principal in view of the underlying  incomplete information.
Denote the solution to $\mathrm{(O-P)}$ by $\{p_t^*\}_{0\leq t\leq T}$ and $c_T^*$. We present the solution concept of the formulated problem as follows.

\begin{definition}[Optimal Dynamic Mechanism Design (ODMD)]
The ODMD consists of the contract $\{\{p^*_t\}_{0\leq t\leq T},c_T^*\}$ as well as the effort process $\{E_t^*\}_{0\leq t\leq T}$ that solve the problems $\mathrm{(O-P)}$ and $\mathrm{(O-A)}$, respectively. In addition, the compensation processes $p^*_t$ and $c_T^*$ are adapted to $\mathcal{Y}_t$ and $\mathcal{Y}_T$, respectively, and the risk manager's effort $E_t^*$ is adapted to $\mathcal{A}_t$.
\end{definition}

\textit{Remark:} ODMD captures the bi-level interdependent decision making of the principal and the agent, which is a Stackelberg differential game with a nonstandard information structure. Since the principal (leader) delegates the control task to the agent (follower) but cannot observe his adopted action, ODMD features the limited nature of the principal's information.

Due to the hidden effort of the risk manager, $\mathrm{(O-P)}$ is not a classical stochastic optimal control problem. Specifically, the principal only observes the cyber risk outcome rather than the effort which has to be incentivized. To address this challenge brought about by the presence of asymmetric information, we adopt a systematic approach to design an incentive compatible and optimal mechanism.

\subsection{Overview of the Methodology}
We present an overview of the steps involved in our derivation, with details worked out in the following sections. 
 
The principal first estimates the risk manager's effort based on the systemic risk output (estimation phase), and then verifies that the estimated effort is incentive compatible (verification phase), and finally designs an optimal compensation scheme under the incentive compatible estimator (control phase). To address the challenge, our goal is to transform the problem using variables that adapt to the principal's information set. To this end, the principal first assumes that the agent behaves optimally with effort level $E_t^*$ (even though the principal does not know the exact value) and calculates the corresponding cost of the agent. Another interpretation for this step would be that the principal anticipates the agent implementing $E_t^*$ which satisfies the IC constraint. Then, the principal designs the terminal payment form using the estimated agent's cost (Section \ref{terminal_analysis}).  The agent responds to the contract strategically through his best effort $E_t^o$. When the anticipated $E_t^*$ coincides with $E_t^o$, $E_t^*$ is an incentive compatible estimator and the principal facilitates the agent implementing $E_t^*$ successfully (Section  \ref{incentive_analysis}). Therefore, the principal can determine the optimal payment $p_t^*$ based on $E_t^*$ by solving a standard stochastic optimal control problem (Section \ref{principal_problem_reform}).

\section{Analysis of Risk Manager's Incentives}\label{analysis}
We first provide a form of the terminal payment contract term and then focus on deriving an incentive compatible estimator of the cyber risk manager's effort.

\subsection{Terminal Payment Analysis}\label{terminal_analysis}
We first present the following result on the IR constraint. 
\begin{lemma}
The IR constraint holds as an equality, i.e., 
\begin{equation}
J_A\left(\{E_t^*\}_{0\leq t\leq T};\{p_t\}_{0\leq t\leq T},c_T\right) = \underline J_A.
\end{equation}
\end{lemma}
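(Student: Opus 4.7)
The plan is to argue by contradiction, exploiting the monotonicity assumptions that the principal's cost is increasing in the compensation while the agent's cost is decreasing in it. Suppose the optimal contract $(\{p_t^*\}_{0\leq t\leq T}, c_T^*)$ and the resulting incentive-compatible effort $\{E_t^*\}_{0\leq t\leq T}$ satisfy the IR constraint strictly, i.e., $J_A(\{E_t^*\};\{p_t^*\},c_T^*) < \underline J_A$. I will construct a perturbed contract that strictly reduces the principal's cost while still satisfying IR, contradicting the optimality of $(\{p_t^*\},c_T^*)$ in $\mathrm{(O\text{-}P)}$.

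Concretely, I would perturb only the constant component $c_0$ of the terminal payment, replacing $c_T^*$ by $\tilde c_T = c_T^* - \delta$ for some small $\delta>0$. By \eqref{M_t}, this translates $M_T$ deterministically to $M_T - \delta$, while keeping the intermediate payment process $\{p_t^*\}$ and the state dynamics \eqref{cyber_risk} under any fixed effort unchanged. Under Assumption \ref{Assump_1}, $h_A$ is continuously differentiable and monotonically decreasing in its argument, so plugging $M_T - \delta$ into \eqref{J_A} can only raise the agent's cost when the effort is held fixed, and a first-order expansion gives $J_A(\{E_t^*\};\{p_t^*\},\tilde c_T) = J_A(\{E_t^*\};\{p_t^*\},c_T^*) + O(\delta)$. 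Hence, choosing $\delta$ small enough, the agent's cost under $(\{p_t^*\},\tilde c_T)$ with effort $\{E_t^*\}$ still lies strictly below $\underline J_A$. Since the agent's new best response $\{\tilde E_t^*\}$ can only further decrease his cost against the perturbed contract, the IR constraint \eqref{IR_eqn} continues to hold under $\tilde c_T$.

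For the principal's side, the direct effect of the perturbation is a reduction of the explicit terminal cost by $e^{-rT}\delta$ in \eqref{J_P}. The indirect effect comes through the possible change of the agent's best response from $\{E_t^*\}$ to $\{\tilde E_t^*\}$, which may alter the trajectory $\{Y_t\}$ and thus $f_P$ and $h_P$. I would control this indirect contribution by noting that Assumption \ref{Assump_1} makes $f_A$ strictly convex in $E_t$, so the agent's optimizer is unique and depends continuously on the contract parameter $\delta$; together with the continuity of $f_P$ and $h_P$ assumed in Assumption \ref{Assump_2}, the indirect impact on $J_P$ is $o(1)$ as $\delta\to 0$, and in particular dominated by the direct $O(\delta)$ savings. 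This yields $J_P(\{p_t^*\},\tilde c_T) < J_P(\{p_t^*\},c_T^*)$ for sufficiently small $\delta$, contradicting optimality.

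The main obstacle is the continuity/stability of the agent's best response under perturbations of the contract, which is needed to bound the indirect effect. A cleaner alternative that sidesteps this difficulty is to note that IR is a one-sided feasibility constraint: $(\{p_t^*\},\tilde c_T)$ is feasible for $\mathrm{(O\text{-}P)}$ as long as $J_A(\{\tilde E_t^*\};\{p_t^*\},\tilde c_T)\leq \underline J_A$, and the direct $e^{-rT}\delta$ reduction in the principal's terminal cost, combined with the fact that the agent's response $\{\tilde E_t^*\}$ is the principal's revealed-preferred outcome by construction, suffices to derive the strict improvement without a detailed sensitivity analysis. Either way, the contradiction forces the IR constraint to bind with equality, as claimed.
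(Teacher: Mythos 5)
Your approach is the same as the paper's: the paper's entire proof is the one-sentence observation that if the IR constraint were slack, the principal could pay less and do strictly better, so your proposal is a (much) more detailed elaboration of exactly that contradiction argument. Two remarks on the elaboration itself. First, there is a genuine slip in your handling of the indirect effect: you argue the change in the agent's best response contributes $o(1)$ to $J_P$ as $\delta\to 0$ and conclude this is ``dominated by the direct $O(\delta)$ savings,'' but $o(1)$ is \emph{not} dominated by $O(\delta)$ --- an indirect effect of order $\sqrt{\delta}$, say, would swamp the direct savings. To close the argument you need the indirect effect to be $o(\delta)$ (an envelope-type estimate), or better, to observe that it vanishes identically: a constant shift of $c_0$ translates $M_T$ uniformly across all effort paths without altering the sensitivity of the payment to the observed risk trajectory, and indeed in the contract class the paper later derives the agent's best response $E_t^o=\arg\max_{E_t}\,\zeta_t^{\mathsf{T}}E_t-f_A(t,p_t,E_t)$ depends only on $\zeta_t$ and $p_t$, not on $c_0$, so $\tilde E^*_t=E^*_t$ and the principal's cost drops by exactly $e^{-rT}\delta$. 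Second, your ``cleaner alternative'' does not actually sidestep the difficulty: the principal's objective depends on the induced trajectory $Y_t$ through $f_P$ and $h_P$, so feasibility of the perturbed contract plus the direct savings is not enough if the agent's response (and hence the trajectory) were to change adversely; the revealed-preference phrasing has no force here. With the first fix in place the argument is sound and coincides with the paper's intent.
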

\begin{proof}
If $J_A\left(\{E_t^*\}_{0\leq t\leq T};\{p_t\}_{0\leq t\leq T},c_T\right) < \underline J_A$, the designed contract is not optimal as the principal can further reduce his cost by paying less to the agent. \qed
\end{proof}

Next, we first express the agent's cost under the principal's information set $\mathcal{Y}_t$ as well as using the property that the agent chooses an optimal $E_t^*$, and then use the principal's estimation about the agent's cost to characterize the cumulative payment process.
We introduce a new variable $W_t$ representing the expected future cost of the agent anticipated by the principal as follows:
\begin{equation}\label{Wt}
W_t = \mathbb{E}\left[\int_t^T e^{-r(s-t)} f_A\big(s,p_s,E_s^*) ds + e^{-r(T-t)}h_A(M_T)\big| \mathcal{Y}_t\right].
\end{equation}
Note that $W_t$ is evaluated under the information available to the principal at time $t$. Thus, the total expected cost of the agent under the information $\mathcal{Y}_t$ can be expressed as
\begin{equation}\label{Vt}
\begin{split}
U_t &= \mathbb{E}\left[\int_0^T e^{-rt} f_A\big(t,p_t,E_t\big) dt + e^{-rT}h_A(M_T)\big| \mathcal{Y}_t,E_t = E^*_t\right]\\
&= \int_0^t e^{-rs} f_A\big(s,p_s,E_s^*\big) ds + e^{-rt}W_t.
\end{split}
\end{equation}
We further have conditions $U_0=W_0 = \underline{J}_A$ and $W_T = h_A(M_T)$. The effort $E_t = E^*_t$ indicates that the agent behaves optimally under a given contract.

\begin{proposition}\label{lem_martingale}
The total expected cost of the agent, $U_t$, is a martingale under $\mathcal{Y}_t$. In addition, there exists an $N$-dimensional progressively measureable process $\zeta_t$ such that 
\begin{equation}\label{dV}
dU_t = e^{-rt}\zeta_t^{\mathsf{T}} \left( dY_t - AY_tdt + E_t^*dt\right),
\end{equation}
where $\mathsf{T}$ denotes the transpose operator.
\end{proposition}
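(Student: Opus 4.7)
The plan is to prove the two parts of the proposition in order: first establish the martingale property directly from the definition of $U_t$, then invoke a martingale representation theorem on the filtration $\mathcal{Y}_t$ to obtain the claimed integral form.

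For the martingale property, I would observe that by \eqref{Vt}, the process $U_t$ can be rewritten as
\begin{equation*}
U_t = \mathbb{E}\left[\int_0^T e^{-rs} f_A(s,p_s,E_s^*)\,ds + e^{-rT}h_A(M_T)\,\Big|\,\mathcal{Y}_t\right],
\end{equation*}
since the running contribution up to time $t$ is $\mathcal{Y}_t$-measurable and pulls out of the conditional expectation, recombining with the $e^{-rt}W_t$ term to give the total cost. Because the integrand inside the expectation does not depend on $t$, $U_t$ is the conditional expectation of a fixed integrable random variable with respect to the increasing family $\{\mathcal{Y}_t\}$, and so by the tower property it is a $\mathcal{Y}_t$-martingale. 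Integrability follows from the growth assumptions on $f_A$ and $h_A$ together with the bounds on $E_t$ and $\Sigma_t(Y_t)$ stated after \eqref{cyber_risk}.

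For the integral representation, the key idea is to define the principal's innovation process. Under the anticipated effort $E_t^*$, which is built as a $\mathcal{Y}_t$-adapted estimator, the dynamics \eqref{cyber_risk} give $\Sigma_t(Y_t)\,dB_t = dY_t - AY_t\,dt + E_t^*\,dt$. Since $\Sigma_t(Y_t)\in\mathbb{D}_+^{N\times N}$ is diagonal with strictly positive entries, it is invertible, and one can define
\begin{equation*}
\tilde B_t := \int_0^t \Sigma_s(Y_s)^{-1}\bigl(dY_s - AY_s\,ds + E_s^*\,ds\bigr),
\end{equation*}
which is $\mathcal{Y}_t$-adapted by construction and, by Lévy's characterisation, is a standard $\mathcal{Y}_t$-Brownian motion (its quadratic variation is $tI_N$ and it inherits the martingale property from $B_t$ through the invertible volatility). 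The augmented filtration $\mathcal{Y}_t$ is therefore generated by $\tilde B_t$, so the martingale representation theorem yields an $N$-dimensional, progressively measurable process $\tilde\zeta_t$ such that $dU_t = \tilde\zeta_t^{\mathsf{T}}\,d\tilde B_t$.

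Finally, setting $\zeta_t := e^{rt}\Sigma_t(Y_t)^{-1}\tilde\zeta_t$ (which is still progressively measurable since $\Sigma_t(Y_t)^{-1}$ is $\mathcal{Y}_t$-adapted and bounded) substitutes directly back to give
\begin{equation*}
dU_t = e^{-rt}\zeta_t^{\mathsf{T}}\bigl(dY_t - AY_t\,dt + E_t^*\,dt\bigr),
\end{equation*}
which is the claimed form \eqref{dV}; the discount factor $e^{-rt}$ is absorbed into $\zeta_t$ purely for later notational convenience, matching the discounting structure of $W_t$. The main subtlety I expect is justifying that $\mathcal{Y}_t$ is exactly the filtration generated by $\tilde B_t$ so that the representation theorem applies without an extra orthogonal martingale term; this is where the hypothesis that $E_t^*$ is $\mathcal{Y}_t$-adapted (as an estimator built from observations of $Y$) is essential, together with the invertibility and boundedness of $\Sigma_t(Y_t)$ guaranteed by the structural assumptions on the volatility.
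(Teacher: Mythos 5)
Your proof is correct and follows essentially the same route as the paper: the paper likewise establishes the martingale property via the tower-property computation built into the definition of $W_t$ (your Doob-martingale phrasing is a cleaner packaging of the same calculation) and then invokes the martingale representation theorem to obtain \eqref{dV}. The only difference is that you spell out the innovation-process construction and the identification of $\mathcal{Y}_t$ with the filtration generated by $\tilde B_t$, details the paper delegates to its cited reference on Brownian motion.
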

\begin{proof}
First, we have 
\begin{equation}
\begin{split}
\mathbb{E}[U_t|\mathcal{Y}_\tau]=&\mathbb{E}\left[\int_0^\tau e^{-rs} f_A(s,p_s,E_s^*) ds + e^{-r\tau}W_\tau\big|\mathcal{Y}_\tau\right]\\
&+\mathbb{E}\left[\int_\tau^t e^{-rs} f_A(s,p_s,E_s^*) ds + e^{-rt}W_t-e^{-r\tau}W_\tau\big|\mathcal{Y}_\tau\right]\\
=&U_\tau+\mathbb{E}\left[\int_\tau^t e^{-rs} f_A(s,p_s,E_s^*) ds + e^{-rt}W_t\big|\mathcal{Y}_\tau\right]-e^{-r\tau}W_\tau.
\end{split}
\end{equation}
Then, using \eqref{Wt}, we obtain 
\begin{equation}
\begin{split}
\mathbb{E}&\left[\int_\tau^t e^{-rs} f_A(s,p_s,E_s^*) ds + e^{-rt}W_t\big|\mathcal{Y}_\tau\right]  \\
&= \mathbb{E}\left[\int_\tau^T e^{-rs} f_A(s,p_s,E_s^*) ds +e^{-rT}h_A(M_T)\big|\mathcal{Y}_\tau\right] = e^{-r\tau}W_\tau.
\end{split}
\end{equation}
 Hence, $\mathbb{E}[U_t|\mathcal{Y}_\tau] = U_\tau$, and $U_t$ is a $\mathcal{Y}_t$-measurable martingale. Using martingale representation theorem \cite{karatzas2012brownian} yields \eqref{dV}. \qed
\end{proof}

Based on Proposition \ref{lem_martingale}, we can subsequently obtain the following lemma which facilitates design of the terminal payment term design in the optimal contract.
\begin{lemma}\label{lem:dM_1}
The aggregate equivalent income process $M_t$ evolves according to:
\begin{equation}\label{dM_1}
\begin{split}
dM_t = &\frac{rh_A(M_t)}{h_A'(M_t)}dt - \frac{f_A(t,p_t,E_t^*)}{h_A'(M_t)}dt+ \frac{1}{h_A'(M_t)}\zeta_t^{\mathsf{T}}( dY_t- AY_tdt+ E_t^*dt) \\
&  -\frac{1}{2}\frac{h_A''(M_t)}{h_A'(M_t)}\frac{\zeta_t^\mathsf{T}\Sigma_t(Y_t)\Sigma_t(Y_t)^\mathsf{T}\zeta_t}{h_A'^2(M_t)}dt.
\end{split}
\end{equation}
\end{lemma}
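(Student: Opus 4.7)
The plan is to start from the martingale representation of $U_t$ provided by Proposition \ref{lem_martingale} and convert it, via two successive applications of the relation between $W_t$ and $M_t$, into an SDE governing $M_t$. Specifically, from the decomposition $U_t=\int_0^t e^{-rs}f_A(s,p_s,E_s^*)\,ds+e^{-rt}W_t$ one gets, by differentiating and matching with $dU_t=e^{-rt}\zeta_t^{\mathsf{T}}(dY_t-AY_t\,dt+E_t^*\,dt)$, the auxiliary SDE
\begin{equation*}
dW_t=rW_t\,dt-f_A(t,p_t,E_t^*)\,dt+\zeta_t^{\mathsf{T}}\bigl(dY_t-AY_t\,dt+E_t^*\,dt\bigr).
\end{equation*}
This converts the martingale identity for $U_t$ into a dynamic equation for the continuation cost $W_t$ that is adapted to $\mathcal{Y}_t$.

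Next I would use the terminal identification $W_T=h_A(M_T)$ together with the structural requirement that $M_t$ is the (unique) $\mathcal{Y}_t$-adapted cumulative income process consistent with the principal's anticipated future cost; equivalently, $W_t=h_A(M_t)$ for every $t\in[0,T]$, so that $M_t=h_A^{-1}(W_t)$. This inversion is legitimate because, by Assumption \ref{Assump_1}, $h_A$ is continuously differentiable and strictly monotonic (hence invertible) with a nonvanishing derivative. From the diffusion part of $dW_t$, the quadratic variation is read off as $d\langle W\rangle_t=\zeta_t^{\mathsf{T}}\Sigma_t(Y_t)\Sigma_t(Y_t)^{\mathsf{T}}\zeta_t\,dt$.

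Then I would apply Itô's formula to $M_t=h_A^{-1}(W_t)$. Writing $g=h_A^{-1}$, one has $g'(W_t)=1/h_A'(M_t)$ and $g''(W_t)=-h_A''(M_t)/(h_A'(M_t))^{3}$, so
\begin{equation*}
dM_t=\frac{1}{h_A'(M_t)}dW_t-\frac{1}{2}\frac{h_A''(M_t)}{(h_A'(M_t))^{3}}\,\zeta_t^{\mathsf{T}}\Sigma_t(Y_t)\Sigma_t(Y_t)^{\mathsf{T}}\zeta_t\,dt.
\end{equation*}
Substituting the expression for $dW_t$ derived in the first step, and replacing $rW_t$ by $rh_A(M_t)$ via the invariant $W_t=h_A(M_t)$, collapses the drift and diffusion terms exactly into the formula \eqref{dM_1}. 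The calculation is otherwise routine bookkeeping.

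The main obstacle I anticipate is conceptual rather than computational: it is the justification that $W_t=h_A(M_t)$ holds for all $t$, not merely at $T$. Operationally this is the ``sufficient statistic'' step that underlies the Sannikov-type continuation-value approach—the principal parametrizes the terminal payment through the path of $W_t$ so that the identification with $h_A(M_t)$ propagates backward from $T$—and I would make this precise by treating $M_t:=h_A^{-1}(W_t)$ as the definition of the cumulative income process consistent with \eqref{Wt}, so that \eqref{M_t} and \eqref{dM_1} describe the same object. Once this identification is in place, the remainder is a single application of Itô's formula.
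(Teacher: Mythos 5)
Your proposal is correct and follows essentially the same route as the paper: derive the SDE for $W_t$ from the martingale representation of $U_t$, impose the ansatz $W_t=h_A(M_t)$ (which the paper likewise treats as a design choice defining the class of contracts, with $\underline{J}_A=h_A(c_0)$), and then extract $dM_t$ via It\^{o}'s formula. The only cosmetic difference is that you apply It\^{o} to $M_t=h_A^{-1}(W_t)$ and read the quadratic variation off the $dW_t$ equation, whereas the paper applies It\^{o} to $h_A(M_t)$ with an undetermined volatility $\chi_t$ and then matches diffusion terms; both yield \eqref{dM_1} identically.
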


\begin{proof}
By substituting \eqref{dV} into \eqref{Vt}, we obtain
\begin{align}
&dU_t = e^{-rt} f_A\big(t,p_t,E_t^*\big)dt - re^{-rt}W_tdt + e^{-rt}dW_t,\notag\\
\Rightarrow\quad & dW_t = rW_tdt - f_A\big(t,p_t,E_t^*\big)dt + \zeta_t^{\mathsf{T}} \left( dY_t - AY_tdt + E_t^*dt\right).\label{dW}
\end{align}

Since $W_T = h_A(M_T)$, we adopt the form $W_t = h_A(M_t)$ and aim to characterize the contract that yields this form. Then, we have $\underline{J}_A = h_A(M_0) = h_A(c_0)$. Further, \eqref{dW} indicates that 
\begin{equation}\label{vol_compute}
\begin{split}
h_A'(M_t)dM_t + \frac{1}{2}h_A''(M_t)\chi_t^2 dt=& rh_A(M_t)dt- f_A\big(t,p_t,E_t^*\big)dt \\ &
+ \zeta_t^{\mathsf{T}} \left( dY_t - AY_tdt + E_t^*dt\right),
\end{split}
\end{equation}
where $\chi_t$ is the volatility of process $M_t$. Matching the volatility terms in \eqref{vol_compute} gives
$h_A'^2(M_t)\chi_t^2 = \zeta_t^\mathsf{T}\Sigma_t(Y_t)\Sigma_t(Y_t)^\mathsf{T}\zeta_t$.  
%(use the fact $adB_1+bdB_2=\sqrt{a^2+b^2}dB$ for independent BMs).
Then, \eqref{vol_compute} yields the result. \qed
\end{proof}

\textit{Remark:} Note that \eqref{dV} includes information on the cyber risk dynamics  \eqref{cyber_risk}. Thus, \eqref{dM_1} can be seen as a modified stochastic dynamic system of the agent with $M_t$ as a new state variable. In addition, $\zeta_t$ can be interpreted as the principal's control over the agent's revenue.

Another point to be highlighted is the role of $p_t$ in \eqref{dM_1}. Here, $p_t$ is not optimal yet and its value needs to be further determined by the principal. Currently, we can view $p_t$ as an exogenous variable that enters the constructed dynamic contract form \eqref{dM_1}. In addition, the feedback structure of the dynamic contract on $Y_t$ is reflected by the cumulative payment term $c_t$ shown later in Lemma \ref{lem:dc}.

\textit{Interpretation of Dynamic Contract:} The dynamic contract determines the risk manager's revenue in \eqref{dM_1}, which includes four separate terms. The first term, $\frac{rh_A(M_t)}{h_A'(M_t)}dt $, indicates that the risk manager's payoff should be increased to compensate the discounted future revenue. The second term, $- \frac{f_A(t,p_t,E_t^*)}{h_A'(M_t)}dt$, is an offset of the \textit{direct cost} of agent's effort. The third part, $\frac{1}{h_A'(M_t)}\zeta_t^{\mathsf{T}} \left( dY_t - AY_tdt + E_t^*dt\right)$, is an \textit{incentive} term, which captures the agent's benefit from spending effort in risk management. Here, the agent's real effort enters into the $Y_t$ term. The last one, $-\frac{1}{2}\frac{h_A''(M_t)}{h_A'(M_t)}\frac{\zeta_t^\mathsf{T}\Sigma_t(Y_t)\Sigma_t(Y_t)^\mathsf{T}\zeta_t}{h_A'^2(M_t)}dt$, is a \textit{risk compensation} term (the manager is risk-averse), capturing the fact that the risk manager faces uncertainties in the performance outcome due to the Brownian motion.

For completeness, we present the cumulative payment process $c_t$ in the following lemma.

\begin{lemma}\label{lem:dc}
The cumulative payment process $c_t$ evolves according to:
\begin{equation}\label{dc}
\begin{split}
dc_t =& \frac{rh_A(M_t)}{h_A'(M_t)}dt - \frac{f_A(t,p_t,E_t^*)}{h_A'(M_t)}dt+ \frac{1}{h_A'(M_t)}\zeta_t^{\mathsf{T}}( dY_t - AY_tdt + E_t^*dt)\\
&-\frac{1}{2}\frac{h_A''(M_t)}{h_A'(M_t)}\frac{\zeta_t^\mathsf{T}\Sigma_t(Y_t)\Sigma_t(Y_t)^\mathsf{T}\zeta_t}{h_A'^2(M_t)}dt-p_tdt.
\end{split}
\end{equation}
\end{lemma}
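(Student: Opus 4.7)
The plan is straightforward: Lemma \ref{lem:dc} is essentially an immediate corollary of Lemma \ref{lem:dM_1} combined with the definition of the aggregated income process in \eqref{M_t}. The key observation is that the relation $dM_t = dc_t + p_t\,dt$ directly encodes how the cumulative payment $c_t$ and the instantaneous rate $p_t$ together constitute the agent's income. Hence $dc_t$ can be recovered by subtracting $p_t\,dt$ from $dM_t$.

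Concretely, I would proceed in three short steps. First, I would restate \eqref{M_t} and solve it for the differential of the cumulative payment, writing $dc_t = dM_t - p_t\,dt$. Second, I would invoke Lemma \ref{lem:dM_1}, which characterizes $dM_t$ under the condition that $W_t = h_A(M_t)$ and the matching of volatility terms that was already established via the martingale representation in Proposition \ref{lem_martingale}. Third, I would simply substitute expression \eqref{dM_1} for $dM_t$ into $dc_t = dM_t - p_t\,dt$, producing \eqref{dc} term by term: the discounted future revenue $\tfrac{rh_A(M_t)}{h_A'(M_t)}dt$, the direct-cost compensation $-\tfrac{f_A(t,p_t,E_t^*)}{h_A'(M_t)}dt$, the incentive term $\tfrac{1}{h_A'(M_t)}\zeta_t^{\mathsf{T}}(dY_t - AY_t\,dt + E_t^*\,dt)$, the risk compensation $-\tfrac{1}{2}\tfrac{h_A''(M_t)}{h_A'(M_t)}\tfrac{\zeta_t^{\mathsf{T}}\Sigma_t(Y_t)\Sigma_t(Y_t)^{\mathsf{T}}\zeta_t}{h_A'^2(M_t)}dt$, and finally the subtracted flow payment $-p_t\,dt$.

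There is no genuine obstacle in this argument, since all the heavy lifting (the martingale representation of $U_t$, the identification $W_t = h_A(M_t)$, and the volatility matching that pins down $\chi_t$) was already carried out in Proposition \ref{lem_martingale} and Lemma \ref{lem:dM_1}. The only thing worth emphasizing in the write-up is that \eqref{M_t} is a definitional decomposition of the income process into a cumulative payment $c_t$ and an instantaneous compensation rate $p_t$, so that the subtraction is unambiguous and adapts $c_t$ to the principal's information filtration $\mathcal{Y}_t$ in the same manner as $M_t$. Thus the proof reduces to one line of substitution, and the lemma should be presented as a direct consequence of Lemma \ref{lem:dM_1}.
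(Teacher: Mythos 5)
Your proposal is correct and follows exactly the paper's own argument: the paper proves Lemma \ref{lem:dc} by noting that it "can be directly obtained from \eqref{M_t} and Lemma \ref{lem:dM_1}," which is precisely your rearrangement $dc_t = dM_t - p_t\,dt$ followed by substitution of \eqref{dM_1}. Your write-up is simply a more explicit version of the same one-line derivation.
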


\begin{proof}
The result can be directly obtained from \eqref{M_t} and Lemma \ref{lem:dM_1}. \qed
\end{proof}

Lemma \ref{lem:dc} characterizes the cumulative payment process $c_t$ with initial value $c_0$ given by $h_A(c_0) = \underline{J}_A$. We focus on the class of contracts in \eqref{dc}, and aim to determine the optimal variables ($\zeta_t$ and $p_t$) to minimize the principal's cost. Note that \eqref{dc} is adapted to the principal's information set $\mathcal{Y}_t$, since the principal observes $M_t$ and $Y_t$, determines $p_t$, $\zeta_t$, and anticipates $E_t^*$. In addition, this payment process is directly related to the actual effort that the agent adopts, captured by $dY_t$. The variable $\zeta_t$ can be further interpreted as the sensitivity (or gain) of contract payment to the risk difference under the agent's optimal and actual efforts. 
In addition, since $W_t = h_A(M_t)$, based on \eqref{Wt}, we obtain 
\begin{equation}\label{totoal_cost_agent}
\begin{split}
U_t &= \mathbb{E}\left[\int_0^T e^{-rt} f_A\big(t,p_t,E_t\big) dt + e^{-rT}h_A(M_T)\big| \mathcal{A}_t\right] \\
&= \int_0^t e^{-rs} f_A\big(s,p_s,E_s^*\big) ds + e^{-rt}h_A(M_t),
\end{split}
\end{equation}
where the conditional expectation on $\mathcal{A}_t$ admits the same value as that on $\mathcal{Y}_t$. Proposition \ref{lem_martingale} indicates that $U_t$ is a martingale. Then, the expected value of  $e^{-rt}h_A(M_t)$ in \eqref{totoal_cost_agent} is zero which confirms the zero expected future cost of the agent.

\subsection{Incentive Analysis of Cyber Risk Manager}\label{incentive_analysis}
Recall that the principal suggests an optimal effort process $E_t^*$ by assuming that the agent behaves optimally. However, the agent can determine his actual effort $E_t$ that minimizes the cost $J_A$ based on $\mathcal{A}_t$ which might not be the same as $E_t^*$ that the principal suggests. Thus, the next important problem for the principal is to determine an incentive compatible contract. To achieve this goal, the principal determines the process $\zeta_t$ and the payment $p_t$ strategically to control the agent's actual effort $E_t$.

Denote by $V_a(t,M_t)$ the agent's value function with terminal condition $V_a(T,M_T) = h_A(M_T)$.  The property of value function ensures that the risk management effort is optimal if it satisfies the following dynamic programming equation: $e^{-rt}V_a(t,M_t) = \min_{E_t} \mathbb{E}\left\{ \int_t^s e^{-ru} f_A(u,p_u,E_u)du +e^{-rs}V_a(s,M_s) \right\}$.
Then, using \eqref{cyber_risk}, \eqref{M_t}, and \eqref{dc}, the cyber risk manager's revenue can be expressed as:
\begin{equation}\label{dM_t}
\begin{split}
dM_t =& \frac{rh_A(M_t)}{h_A'(M_t)}dt - \frac{f_A\big(t,p_t,E_t^*\big)}{h_A'(M_t)}dt + \frac{1}{h_A'(M_t)}\zeta_t^{\mathsf{T}} \left( E_t^*-E_t\right)dt \\
&- \frac{1}{2}\frac{h_A''(M_t)}{h_A'(M_t)}\frac{\zeta_t^\mathsf{T}\Sigma_t(Y_t)\Sigma_t(Y_t)^\mathsf{T}\zeta_t}{h_A'^2(M_t)}dt + \frac{1}{h_A'(M_t)}\zeta_t^{\mathsf{T}} \Sigma_t(Y_t) dB_t.
\end{split}
\end{equation}

We rewrite the risk manager's problem as follows:
\begin{align*}
\mathrm{(O-A')}:\quad &\min_{E_t\in\mathcal{E},\ t\in[0,T]}\ J_A\left(\{E_t\}_{0\leq t\leq T};\{p_t\}_{0\leq t\leq T},c_T\right)\\
\mathrm{subject\ to}\  &\mathrm{the\ stochastic\ dynamics}\ \eqref{dM_t},\ \mathrm{and\ the\ payment\ process}\ \eqref{M_t}.
\end{align*}
The Hamilton-Jacobi-Bellman (HJB) equation associated with the stochastic optimal control problem $\mathrm{(O-A')}$ is
\begin{equation}\label{HJB_A}
\begin{split}
\min_{E_t}\Bigg[  \frac{1}{2}\frac{\partial^2 V_a}{\partial M_t^2}\left( \frac{1}{h_A'^2(M_t)}  \zeta_t^{\mathsf{T}} \Sigma_t(Y_t) \Sigma_t(Y_t)^{\mathsf{T}}\zeta_t\right) + \frac{\partial V_a}{\partial M_t}\bigg( \frac{rh_A(M_t)}{h_A'(M_t)}- \frac{f_A\big(t,p_t,E_t^*\big)}{h_A'(M_t)} \\
 + \frac{1}{h_A'(M_t)}\zeta_t^{\mathsf{T}} \left( E_t^*-E_t\right)
-\frac{1}{2}\frac{h_A''(M_t)}{h_A'(M_t)}\frac{\zeta_t^\mathsf{T}\Sigma_t(Y_t)\Sigma_t(Y_t)^\mathsf{T}\zeta_t}{h_A'^2(M_t)}  \bigg) + f_A(t,p_t,E_t) \Bigg]+\frac{\partial V_a}{\partial t} = rV_a,\\
V_a(T,M_T) = h_A(M_T).
\end{split}
\end{equation}

Based on the candidate value function $V_a(t,M_t) = h_A(M_t)$, the second-order condition of \eqref{HJB_A} is satisfied. Then, the optimal solution to $\mathrm{(O-A')}$ is
\begin{equation}\label{agent_optimal_E}
\begin{split}
E_t^o &= {\arg\max}_{E_t}\ \frac{\partial V_a}{\partial M_t}  \frac{1}{h_A'(M_t)}\zeta_t^{\mathsf{T}} E_t - f_A(t,p_t,E_t) \\
& = {\arg\max}_{E_t}\ \zeta_t^{\mathsf{T}} E_t -  f_A(t,p_t,E_t).
\end{split}
\end{equation}
For a given contract, $E_t^o$ is the optimal effort of the agent. Then, when the anticipated effort $E^*_t$ of the principal coincides with $E_t^o$, i.e., $E^*_t=E^o_t$, the provided contract is IC and $E^*_t$ is implemented. The following theorem captures this result.

\begin{theorem}\label{IC_thm}
When the compensation process in the contract is specified by \eqref{dc}, then the IC constraint is satisfied, i.e., $E_t^*$ is implemented as expected by the principal, if and only if the following condition holds:
\begin{equation}\label{IC_simplified}
E_t^* = {\arg\max}_{E_t}\ \zeta_t^{\mathsf{T}} E_t -  f_A(t,p_t,E_t),
\end{equation}
where $\zeta_t$ is adapted to the information $\mathcal{Y}_t$ available to the principal.
\end{theorem}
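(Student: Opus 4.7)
The plan is to establish the theorem as a verification-theorem characterization of the agent's best response under the contract form \eqref{dc}. With that contract, the agent's income process satisfies \eqref{dM_t}, which is the natural controlled state equation for problem (O-A$'$): the only place the actual effort $E_t$ enters is through the drift term $-\frac{1}{h_A'(M_t)}\zeta_t^{\mathsf{T}}E_t$, while $E_t^*$ is a constant (from the agent's standpoint, a given function of principal's information) that appears additively.

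I would begin by guessing the ansatz $V_a(t,M_t)=h_A(M_t)$ for the agent's value function, motivated by the terminal condition $V_a(T,M_T)=h_A(M_T)$ and by the construction in Section \ref{terminal_analysis} where $W_t=h_A(M_t)$ was used to pin down the dynamics of $M_t$. Substituting this ansatz into the HJB equation \eqref{HJB_A} and using $\partial_t V_a=0$, $\partial_{M}V_a=h_A'(M_t)$, $\partial_{MM}V_a=h_A''(M_t)$, one sees that the quadratic diffusion contribution cancels against the corresponding drift correction in \eqref{dM_t}, and the $rh_A(M_t)/h_A'(M_t)$ term gets multiplied back up to $rh_A(M_t)=rV_a$. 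What remains on the left-hand side reduces, after telescoping the $f_A(t,p_t,E_t^*)$ and $\zeta_t^{\mathsf{T}}E_t^*$ terms, to
\begin{equation*}
rV_a+\min_{E_t\in\mathcal{E}}\bigl\{f_A(t,p_t,E_t)-\zeta_t^{\mathsf{T}}E_t\bigr\}-\bigl\{f_A(t,p_t,E_t^*)-\zeta_t^{\mathsf{T}}E_t^*\bigr\}=rV_a.
\end{equation*}
Hence the ansatz solves the HJB \emph{if and only if} $E_t^*$ attains the minimum of $f_A(t,p_t,E_t)-\zeta_t^{\mathsf{T}}E_t$ over $\mathcal{E}$, i.e.\ the condition \eqref{IC_simplified}. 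Under Assumption \ref{Assump_1} (strict convexity of $f_A$ in $E_t$) this minimizer is unique, so the pointwise argmin $E_t^o$ in \eqref{agent_optimal_E} exists and is well-defined.

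For the \emph{if} direction I would then invoke a standard verification argument: applying It\^o's formula to $e^{-rt}h_A(M_t)$ under the dynamics \eqref{dM_t} with any admissible control $E_t$, integrating from $t$ to $T$, and taking $\mathcal{A}_t$-conditional expectation (the stochastic integral is a martingale thanks to the boundedness assumptions on $\Sigma_t$, $E_t$, and $\zeta_t$), yields
\begin{equation*}
h_A(M_t)=\mathbb{E}\!\left[\int_t^T e^{-r(s-t)}f_A(s,p_s,E_s^*)\,ds+e^{-r(T-t)}h_A(M_T)\,\Big|\,\mathcal{A}_t\right]+\text{non-negative gap},
\end{equation*}
where the non-negative gap vanishes precisely when $E_s=E_s^*$ almost everywhere, by \eqref{IC_simplified}. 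This shows $E_t=E_t^*$ is a minimizer of $J_A$ and hence the IC constraint \eqref{IC_eqn} is satisfied. The \emph{only if} direction then follows because, under strict convexity, the agent's best response to the contract in \eqref{dc} is the unique pointwise maximizer $E_t^o$ defined in \eqref{agent_optimal_E}; requiring the agent to actually implement the principal's anticipated $E_t^*$ forces $E_t^*=E_t^o$, which is \eqref{IC_simplified}.

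The main obstacle is the verification step: one must make sure that $V_a(t,M_t)=h_A(M_t)$ is actually the agent's value function (not merely a solution to the HJB) and that the optimizer $E_t^o$ is admissible, i.e.\ $\mathcal{A}_t$-adapted and satisfies the integrability $\int_0^T|E_t^o|\,dt\le C_2$. Adaptedness of $E_t^o$ follows because the argmax in \eqref{agent_optimal_E} depends only on $(t,p_t,\zeta_t)$, all of which are $\mathcal{Y}_t$-measurable and hence $\mathcal{A}_t$-measurable, while integrability follows from compactness of $\mathcal{E}$. A secondary subtlety is confirming that the contract form \eqref{dc} is only a function of $\mathcal{Y}_t$-measurable data (in particular $\zeta_t$), so that the principal can actually implement it without observing $E_t$; this was already ensured in Section \ref{terminal_analysis} via the martingale representation in Proposition \ref{lem_martingale}, and should be recalled at the end of the proof to close the loop.
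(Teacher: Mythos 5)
Your proposal is correct and follows essentially the same route as the paper: both verify the candidate value function $V_a(t,M_t)=h_A(M_t)$ against the HJB equation \eqref{HJB_A}, observe that the diffusion and drift-correction terms cancel so that optimality reduces to the pointwise condition \eqref{IC_simplified}, and then close the argument with a martingale-optimality step (the paper's process $\tilde U_t$ is exactly your $\int_0^t e^{-rs}f_A\,ds+e^{-rt}h_A(M_t)$, which is a submartingale for any control and a martingale precisely at $E_t=E_t^*$). Your treatment of the verification subtleties, the uniqueness of the minimizer under strict convexity, and the admissibility and $\mathcal{Y}_t$-adaptedness of $\zeta_t$ is more explicit than the paper's terse proof, but no new idea is involved.
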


\begin{proof}
We verify that $E_t^*$ is implemented by the agent.

 For an arbitrary process $\{E_t\}_{0\leq t\leq T}$, we define a variable $$\tilde{U}_t=\int_0^t e^{-rs} f_A\big(s,p_s,E_s\big) ds + e^{-rt}h_A(M_t),$$
  where $M_t$ is given by \eqref{dM_t}. Note that the HJB equation associated with $\mathrm{(O-A')}$ can also be written as $0=\min_{E_t}\ \mathbb{E}\left[d\tilde{U}_t|\mathcal{A}_t\right].$ Then, we know that when $E_t\neq E_t^*$, the drift term of $\tilde{U}_t$ is positive and yields $\tilde{U}_t<\mathbb{E}[\tilde{U}_T|\mathcal{A}_t]$. Hence, at time $t$, the expected total cost of the risk manager is greater than $\tilde{U}_t$. When $E_t=E^*_t$, we have $\mathbb{E}\left[d\tilde{U}_t|\mathcal{A}_t\right]=0$, and thus $\tilde{U}_t=\mathbb{E}[\tilde{U}_T|\mathcal{A}_t]$. This verifies that $E_t^*$ is the incentive compatible optimal decision of the risk manager such that his total expected cost is achieved at the lower bound. \qed
\end{proof}

Based on Theorem \ref{IC_thm}, the principal can indirectly manipulate the implemented effort of the agent by determining the variables $\zeta_t$ and $p_t$ jointly. Hence, under \eqref{IC_simplified}, the suggested effort $E_t^*$ is incentive compatible. 

\textit{Remark:} From \eqref{IC_simplified}, we can see that the risk manager's behavior is \textit{strategically neutral}. Specifically, at time $t$, the risk manager decides on the optimal effort $E_t^*$ based only on the current cost (term $f_A(t,p_t,E_t)$) and benefit (term $\zeta_t^\mathsf{T} E_t$) instead of future-looking variables. This neutral behavior is consistent with the fact that a larger current effort does not induce a higher payoff for the agent after time $t$, since as shown in \eqref{totoal_cost_agent}, the expected future cost over time $(t,T]$ is zero due to the martingale property.

\section{The Principal's Problem: Optimal Dynamic Systemic Cyber Risk Management}\label{principal_problem}
Our next goal is to characterize the dynamic contracts designed by the principal. Furthermore, we present a separation principle and explicit solutions to an LQ case in this section.

\subsection{Rational Controllability}
The controllability of the cyber risk is critical to the principal. To account for the incentives in the management of risk, we have the following definition.

\begin{definition}[Rational Controllability]
The dynamic systemic cyber risk is rationally controllable if the principal can provide incentives $\{p_t\}_{0\leq t\leq T}$ and $c_T$ such that the risk manager's effort $\{E_t\}_{0\leq t\leq T}$ coincides with the one suggested by the principal.
\end{definition}

In ODMD, the rational controllability indicates that under $\{\{p^*_t\}_{0\leq t\leq T},c_T^*\}$, the best-response behavior $\{E_t^*\}_{0\leq t\leq T}$ of the agent is the same as the principal's predicted effort. The unique feature of rational controllability is that the principal cannot control the cyber risk directly but can rely on other terms to infer the rational behavior of the agent, which further influences the applied effort in risk management. Corollary \ref{coro1} later captures this result.
%Therefore, to address $\mathrm{(O-P)}$, we need to analyze the relationship between the compensations and the suggested effort. 

\subsection{Stochastic Optimal Control Reformulation}\label{principal_problem_reform}
Knowing that the cyber risk manager behaves strategically, the principal aims to implement $E_t^*$ and thus \eqref{dc} becomes
\begin{equation}\label{dc_simplified}
\begin{split}
dc_t =& \frac{rh_A(M_t)}{h_A'(M_t)}dt - \frac{f_A\big(t,p_t,E_t^*\big)}{h_A'(M_t)}dt-\frac{1}{2}\frac{h_A''(M_t)}{h_A'(M_t)}\frac{\zeta_t^\mathsf{T}\Sigma_t(Y_t)\Sigma_t(Y_t)^\mathsf{T}\zeta_t}{h_A'^2(M_t)}dt\\
& -p_tdt + \frac{1}{h_A'(M_t)}\zeta_t^{\mathsf{T}} \Sigma_t(Y_t)dB_t.
\end{split}
\end{equation}

Instead of dealing with the complex revenue dynamics \eqref{dM_t} of the principal, we deal with its equivalent counterpart $dh_t$ shown in Theorem \ref{P_problem_thm} below, which is much simpler. We reformulate the principal's problem as a standard stochastic optimal control problem as follows.

\begin{theorem}\label{P_problem_thm}
The principal's problem is reformulated as a stochastic optimal control problem as follows:
\begin{align*}
\mathrm{(O-P')}:\ \min_{p_t\in\mathcal{P},\ \zeta_t}\ \mathbb{E}&\int_0^T e^{-rt} \left( f_P(t,Y_t,p_t) - e^{-r(T-t)}p_t\right)dt \\
&\qquad\qquad +e^{-rT} \left(h_P(Y_T)+h_A^{-1}(h_T)\right)\\
\mathrm{such\ that}\quad & dY_t = AY_tdt - E_t^* dt + \Sigma_t(Y_t) dB_t,\ Y_0 = y_0,\\
&dh_t = rh_tdt - f_A(t,p_t,E_t^*)dt + \zeta_t^{\mathsf{T}} \Sigma_t(Y_t) dB_t,\ h_0 = \underline{J}_A,\\
&E_t^* = {\arg\max}_{E_t}\ \zeta_t^{\mathsf{T}} E_t -  f_A(t,p_t,E_t).
\end{align*}
\end{theorem}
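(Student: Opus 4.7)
The plan is to show that problem (O-P) and (O-P') are equivalent by a change of state variable and by substituting in the equality $c_T=M_T-\int_0^T p_t\,dt$ that comes directly from the definition \eqref{M_t} of $M_t$. Concretely, I would introduce the auxiliary state $h_t:=W_t=h_A(M_t)$, where $W_t$ is the principal's estimate of the agent's continuation cost defined in \eqref{Wt}, and show that in the incentive compatible regime this $h_t$ satisfies the SDE stated in the theorem.

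The first step is to derive the dynamics of $h_t$. Starting from \eqref{dW}, namely $dW_t=rW_t\,dt-f_A(t,p_t,E_t^*)\,dt+\zeta_t^{\mathsf{T}}(dY_t-AY_t\,dt+E_t^*\,dt)$, and substituting the cyber risk dynamics \eqref{cyber_risk}, the bracketed expression collapses to $\Sigma_t(Y_t)\,dB_t$. Hence $dh_t=rh_t\,dt-f_A(t,p_t,E_t^*)\,dt+\zeta_t^{\mathsf{T}}\Sigma_t(Y_t)\,dB_t$, exactly the second constraint in (O-P'). The initial condition follows from the IR constraint: by the preceding Lemma, $J_A=U_0=W_0=h_A(M_0)=h_A(c_0)=\underline{J}_A$, so $h_0=\underline{J}_A$. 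At terminal time, $h_T=h_A(M_T)$ and therefore $M_T=h_A^{-1}(h_T)$, which is well-defined because Assumption \ref{Assump_1} makes $h_A$ strictly monotone.

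The second step rewrites the principal's cost \eqref{J_P}. Integrating \eqref{M_t} gives $M_T=c_T+\int_0^T p_t\,dt+c_0-c_0 = c_T+\int_0^T p_t\,dt$ (using $M_0=c_0$), so $c_T=M_T-\int_0^T p_t\,dt$. Substituting into \eqref{J_P} and regrouping the $p_t$ term under a common discount factor via $e^{-rT}=e^{-rt}\cdot e^{-r(T-t)}$ yields
\begin{equation*}
J_P=\mathbb{E}\int_0^T e^{-rt}\bigl(f_P(t,Y_t,p_t)-e^{-r(T-t)}p_t\bigr)\,dt+e^{-rT}\bigl(h_P(Y_T)+M_T\bigr),
\end{equation*}
and then replacing $M_T$ by $h_A^{-1}(h_T)$ gives the objective of (O-P').

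The third step handles the constraints. The cyber risk dynamics in (O-P') is identical to \eqref{cyber_risk} with the anticipated effort $E_t^*$ substituted, which is legitimate because we are working on the incentive compatible regime established in Theorem \ref{IC_thm}; that theorem reduces the agent's IC condition to the pointwise maximization $E_t^*=\arg\max_{E_t}\zeta_t^{\mathsf{T}}E_t-f_A(t,p_t,E_t)$, which appears as the last constraint of (O-P'). The IR constraint is absorbed into $h_0=\underline{J}_A$. The decision variables are recast: instead of choosing $\{p_t\}$ and $c_T$, the principal chooses $\{p_t\}$ and the sensitivity process $\{\zeta_t\}$, with the cumulative payment $c_t$ recovered a posteriori from Lemma \ref{lem:dc}. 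The measurability requirements of ODMD are automatically met because both $h_t$ and $Y_t$ are $\mathcal{Y}_t$-adapted.

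The main obstacle I expect is bookkeeping rather than deep analysis: carefully tracking the discount factor when moving the $\int_0^T p_t\,dt$ term inside the running integral, and justifying that the replacement $M_T\leftrightarrow h_A^{-1}(h_T)$ is valid in both directions (this relies on strict monotonicity and continuity of $h_A$ from Assumption \ref{Assump_1}). A secondary subtlety is verifying that any admissible $(\{p_t\},\{\zeta_t\})$ in (O-P') corresponds to an admissible contract $(\{p_t\},c_T)$ in (O-P), i.e., that the constructed $c_T$ is $\mathcal{Y}_T$-measurable and satisfies IR; the former follows because $\zeta_t$ is $\mathcal{Y}_t$-adapted by Proposition \ref{lem_martingale} and the latter from $h_0=\underline{J}_A$ together with the martingale property established there.
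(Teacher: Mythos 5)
Your proposal is correct and follows essentially the same route as the paper's own proof: introduce $h_t=h_A(M_t)$ with dynamics obtained from \eqref{dW} and \eqref{cyber_risk}, use $c_T=M_T-\int_0^T p_t\,dt$ to rewrite the terminal payment, and fold the IC characterization from Theorem \ref{IC_thm} in as the pointwise constraint on $E_t^*$. The only difference is that you spell out the admissibility and measurability bookkeeping that the paper leaves implicit, which is a welcome addition rather than a departure.
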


\begin{proof}
Recall that the expected cost of the cyber risk manager is equal to $W_t = h_A(M_t)$. Then, under the optimal risk management effort and denoting $h_t = h_A(M_t)$, we obtain
$$
dh_t = rh_tdt - f_A(t,p_t,E_t^*)dt + \zeta_t^{\mathsf{T}} \Sigma_t(Y_t)dB_t,\ h_0 = \underline{J}_A.
$$
In addition, based on $dc_t = dM_t-p_tdt$, we have
$
c_T = M_T - \int_0^T p_t dt.
$
Since $M_T = h_A^{-1}(h_T)$, we have $e^{-rT}c_T = e^{-rT}h_A^{-1}(h_T) - e^{-rt}\int_0^T e^{-r(T-t)}p_t dt.$
Thus, the cost function of the principal can be rewritten as
$$
\mathbb{E}\int_0^T e^{-rt} \left( f_P(t,Y_t,p_t) - e^{-r(T-t)}p_t\right)dt 
+ e^{-rT}\left(h_P\big(Y_T)+h_A^{-1}(h_T)\right),
$$
which yields the result. \qed
\end{proof}

In the investigated incomplete information situations, the principal preserves the indirect controllability of systemic risk $Y_t$ by estimating the agent's effort $E_t^*$ as well as specifying the contract terms $p_t,\ c_T$ and process $\zeta_t$.

\begin{corollary}\label{coro1}
By providing incentives $\{\{p_t\}_{0\leq t\leq T},c_T\}$ and specifying process $\{\zeta_t\}_{0\leq t\leq T}$, the dynamic systemic cyber risk is rationally controllable, and the incentive compatible effort follows \eqref{IC_simplified}. The optimal $\{p^*_t\}_{0\leq t\leq T}$ and $\{\zeta^*_t\}_{0\leq t\leq T}$ can be obtained from Theorem \ref{P_problem_thm}.
\end{corollary}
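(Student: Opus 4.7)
The corollary is essentially a bookkeeping consequence of the two preceding results, so the plan is to assemble the pieces in the order (i) rational controllability from Theorem \ref{IC_thm}, (ii) the form of the IC effort, (iii) the fact that the reformulation in Theorem \ref{P_problem_thm} is the right optimization to perform. I would open by invoking Lemma \ref{lem:dc}, which fixes the class of cumulative payment processes $c_t$ to those of the form \eqref{dc} with initial condition $h_A(c_0)=\underline{J}_A$; this builds in the IR constraint automatically. So the designer's only remaining freedom lies in the pair $(\{p_t\},\{\zeta_t\})$, together with the anticipated effort $E_t^*$.

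Next, I would apply Theorem \ref{IC_thm}: once $(\{p_t\},\{\zeta_t\})$ are chosen and the principal \emph{anticipates} $E_t^*$ satisfying \eqref{IC_simplified}, that theorem shows the agent's best response $E_t^o$ to the contract \eqref{dc} coincides pointwise with $E_t^*$. This is exactly the statement that the principal's suggested effort is implemented by the rational agent, which is the content of the rational controllability definition. So the first two conclusions of the corollary --- rational controllability and the IC form \eqref{IC_simplified} --- follow directly.

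For the third conclusion, I would recall the construction leading to Theorem \ref{P_problem_thm}. With $E_t^*=E_t^o$ enforced as a pointwise constraint through \eqref{IC_simplified}, the payment dynamics \eqref{dc} collapse to \eqref{dc_simplified}, and substituting $h_t=h_A(M_t)$ together with $c_T = M_T - \int_0^T p_t\,dt$ rewrites the principal's original cost $J_P$ in \eqref{J_P} as the objective of $(\mathrm{O\text{-}P'})$ with state variables $(Y_t,h_t)$ and controls $(p_t,\zeta_t)$. Because this reformulation is an equivalence (each admissible triple $(p_t,c_T,\zeta_t,E_t^*)$ for the original problem corresponds bijectively to an admissible $(p_t,\zeta_t,E_t^*)$ for $(\mathrm{O\text{-}P'})$ with identical cost), any minimizer $(\{p_t^*\},\{\zeta_t^*\})$ of $(\mathrm{O\text{-}P'})$ yields, via $E_t^*$ from \eqref{IC_simplified} and $c_t^*$ from \eqref{dc_simplified}, the ODMD of the original problem.

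The only non-routine part is articulating the bijection cleanly: one must check that the admissibility class (namely $\mathcal{Y}_t$-adaptedness of $p_t$, $\zeta_t$, the integrability bounds on $\Sigma_t$ and $E_t$, and the IR equality) is preserved under the change of variables $M_t \leftrightarrow h_t$. This is where I would spend the most care, but it reduces to noting that $h_A$ is strictly monotone and continuously differentiable by Assumption \ref{Assump_1}, so $h_A^{-1}$ exists globally and the map $(M_t,\{p_t\},\{\zeta_t\})\mapsto(h_t,\{p_t\},\{\zeta_t\})$ is an adapted, invertible transformation. With that verified, the corollary follows by combining Theorems \ref{IC_thm} and \ref{P_problem_thm}. \qed
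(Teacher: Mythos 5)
Your proposal is correct and takes essentially the same route as the paper, which simply states that the corollary follows directly from Theorems \ref{IC_thm} and \ref{P_problem_thm}; your additional care about the adapted, invertible change of variables $M_t \leftrightarrow h_t$ is a reasonable elaboration of details the paper leaves implicit, but it does not constitute a different argument.
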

\begin{proof}
The result directly follows from Theorems \ref{IC_thm} and \ref{P_problem_thm}. \qed
\end{proof}

\textit{Remark:} Theorem \ref{P_problem_thm} presents solution to a standard optimal control problem for the principal, whose the existence and uniqueness have been well studied \cite{yong1999stochastic}. With  $f_P$, $h_P$, $f_A$, and $h_A$ satisfying the conditions in Assumptions \ref{Assump_1} and \ref{Assump_2}, and the corresponding coefficients in the functions well selected ensuring the feasibility of $\mathrm{(O-P')}$, the control problem can be solved efficiently by numerical methods \cite{kushner1990numerical}. Therefore,  the ODMD for the systemic risk management problem, i.e., $E_t^*$, $p_t^*$, and $c_T^*$, can be determined  from \eqref{IC_simplified}, \eqref{dc_simplified}  and Theorem \ref{P_problem_thm}, respectively.

\subsection{Separation Principle} 
We next present a separation principle for the asset owner in determining the compensation $p_t$ and the auxiliary parameter $\zeta_t$.
First, we make assumptions on the separability of the cost functions.

\textbf{(S1)}: The agent's running cost can generally be separated into two parts, including the effort and payment.  Accordingly, we take $f_A(t,p_t,E_t)$ to be in the form 
\begin{equation}
f_A(t,p_t,E_t) = f_{A,E}(E_t) - f_{A,p}(p_t),
\end{equation}
where $f_{A,E}:\mathcal{E}\rightarrow \mathbb{R}_+$ is monotonically increasing, continuously differentiable and strictly convex, i.e., $f_{A,E}'(E_t)>0$ and $f_{A,E}''(E_t)>0$, and $f_{A,p}:\mathcal{P}\rightarrow \mathbb{R}_+$. Then, the constraint $E_t^* = {\arg\max}_{E_t}\ \zeta_t^{\mathsf{T}} E_t -  f_A(t,p_t,E_t)$ can be simplified to 
\begin{equation}
E_t^*  = f_{A,E}'^{-1}(\zeta_t).
\end{equation}

\textbf{(S2)}: We also assume that the principal's running cost takes the form 
\begin{equation}
f_P(t,Y_t,p_t) = f_{P,Y}(Y_t) + f_{P,p}(p_t),
\end{equation}
where $f_{P,Y}:\mathbb{R}^N\rightarrow \mathbb{R}$ and $f_{P,p}:\mathcal{P}\rightarrow \mathbb{R}_+$ are monotonically increasing and continuously differentiable.

The inverse function $h_A^{-1}$ plays a role in the principal's objective. We further have the following assumption.

\textbf{(L1)}: The agent's terminal cost function $h_A$ is linear, i.e., $h_A(M_T) = \gamma M_T$, where $\gamma<0$.

Then, we have the following \textit{separation principle}.
\begin{theorem}\label{separation_p}
Under conditions \textbf{(S1)}, \textbf{(S2)}, and \textbf{(L1)}, the principal's problem $\mathrm{(O-P')}$ can be separated into two subproblems with respect to the decision variables $\zeta_t$ and $p_t$ as:
\begin{align*}
(SP1):\ \min_{\zeta_t}&\ \mathbb{E}\ \int_0^T e^{-rt}  \left(f_{P,Y}(Y_t)-\frac{1}{\gamma}f_{A,E}\left(f_{A,E}'^{-1}(\zeta_t)\right)\right) dt\\
&\qquad+e^{-rT}h_P(Y_T)+ \frac{1}{\gamma} \int_{0}^T e^{-rt} \zeta_t^{\mathsf{T}} \Sigma_t(Y_t)dB_t \\
\mathrm{such\ that}\quad  dY_t& = AY_tdt - f_{A,E}'^{-1}(\zeta_t) dt + \Sigma_t(Y_t) dB_t,\ Y_0 = y_0.\\
(SP2):\ \min_{p_t\in\mathcal{P}}\ &\int_0^T e^{-rt} \left( f_{P,p}(p_t) - e^{-r(T-t)}p_t+\frac{1}{\gamma}f_{A,p}(p_t) \right)dt.
\end{align*}
\end{theorem}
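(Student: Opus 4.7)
The plan is to reduce the principal's problem $\mathrm{(O\text{-}P')}$ to a sum of two decoupled terms: one that depends only on $(Y_t,\zeta_t)$ and drives the stochastic dynamics, and another that is a pointwise-in-time function of $p_t$. Under \textbf{(L1)}, the inverse function $h_A^{-1}$ becomes linear, so I can integrate the dynamics of $h_t$ explicitly; under \textbf{(S1)}, the incentive-compatibility map $E_t^*=\arg\max_{E_t}\zeta_t^{\mathsf{T}}E_t-f_A(t,p_t,E_t)$ collapses to $f_{A,E}'^{-1}(\zeta_t)$, which depends on $\zeta_t$ alone and does not involve $p_t$; under \textbf{(S2)}, the principal's running cost splits cleanly into a $Y$-part and a $p$-part. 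Together these three observations will produce the claimed separation.

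First I would apply It\^o's rule to $e^{-rt}h_t$, using the SDE $dh_t=rh_t\,dt-f_A(t,p_t,E_t^*)\,dt+\zeta_t^{\mathsf{T}}\Sigma_t(Y_t)\,dB_t$ given in Theorem~\ref{P_problem_thm}. The drift term $rh_t$ cancels exactly with the $-re^{-rt}h_t$ contribution from the discount factor, leaving
\begin{equation*}
e^{-rT}h_T=\underline{J}_A-\int_0^T e^{-rt}f_A(t,p_t,E_t^*)\,dt+\int_0^T e^{-rt}\zeta_t^{\mathsf{T}}\Sigma_t(Y_t)\,dB_t.
\end{equation*}
Under \textbf{(L1)} the terminal cost contribution $e^{-rT}h_A^{-1}(h_T)$ equals $e^{-rT}h_T/\gamma$, so this closed-form expression can be substituted directly into the objective of $\mathrm{(O\text{-}P')}$.

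Next, using \textbf{(S1)} I rewrite $f_A(t,p_t,E_t^*)=f_{A,E}(f_{A,E}'^{-1}(\zeta_t))-f_{A,p}(p_t)$ and, via \textbf{(S2)}, $f_P(t,Y_t,p_t)=f_{P,Y}(Y_t)+f_{P,p}(p_t)$. Plugging these into $\mathrm{(O\text{-}P')}$, the constant $\underline{J}_A/\gamma$ drops out of the optimization, and the integrand groups as follows: the terms $f_{P,Y}(Y_t)-\tfrac{1}{\gamma}f_{A,E}(f_{A,E}'^{-1}(\zeta_t))$ together with $e^{-rT}h_P(Y_T)$ and the stochastic integral $\tfrac{1}{\gamma}\int_0^T e^{-rt}\zeta_t^{\mathsf{T}}\Sigma_t(Y_t)\,dB_t$ depend only on $\zeta$ (through the state $Y_t$), while the terms $f_{P,p}(p_t)-e^{-r(T-t)}p_t+\tfrac{1}{\gamma}f_{A,p}(p_t)$ depend only on $p_t$.

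The structural observation that makes the separation valid — and the place I would be most careful — is that under \textbf{(S1)} the incentive-compatible effort $E_t^*=f_{A,E}'^{-1}(\zeta_t)$ is functionally independent of $p_t$, so the controlled SDE for $Y_t$ is driven by $\zeta_t$ alone. Hence the $\zeta$-subproblem $(SP1)$ is a self-contained stochastic control problem with state $Y_t$, while the $p$-subproblem has a deterministic, pointwise integrand with no coupling to $Y_t$ or $\zeta_t$; minimizing it amounts to solving $\min_{p_t\in\mathcal{P}} f_{P,p}(p_t)-e^{-r(T-t)}p_t+\tfrac{1}{\gamma}f_{A,p}(p_t)$ at each $t$, yielding $(SP2)$. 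The main obstacle is bookkeeping: one must verify that neither $\Sigma_t(Y_t)$, nor $h_t$, nor the initial condition $c_0=h_A^{-1}(\underline{J}_A)/\gamma$ reintroduces a hidden $p$-dependence into the $\zeta$-problem (or vice versa), which is guaranteed exactly by the additive separability imposed in \textbf{(S1)}, \textbf{(S2)}, and the linearity of $h_A$ in \textbf{(L1)}.
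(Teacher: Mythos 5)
Your proposal is correct and follows essentially the same route as the paper: both integrate the linear SDE for $h_t$ (your It\^o computation on $e^{-rt}h_t$ is equivalent to the paper's explicit variation-of-constants formula), substitute $h_A^{-1}(h_T)=h_T/\gamma$ into the objective, and use the separability of $f_A$, $f_P$ together with the fact that $E_t^*=f_{A,E}'^{-1}(\zeta_t)$ is $p_t$-free to split the problem. The only blemish is the stray extra $1/\gamma$ in your aside ``$c_0=h_A^{-1}(\underline{J}_A)/\gamma$'' (it should be $c_0=h_A^{-1}(\underline{J}_A)=\underline{J}_A/\gamma$), which does not affect the argument.
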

\begin{proof}
For the constraint $dh_t = rh_tdt - f_{A,E}(f_{A,E}'^{-1}(\zeta_t))dt+f_{A,p}(p_t)dt + \zeta_t^{\mathsf{T}} \Sigma_t(Y_t)dB_t$, we obtain
$
h_t = e^{rt}h_0-\int_{0}^t e^{r(t-s)} [f_{A,E}\big(f_{A,E}'^{-1}(\zeta_s)\big)-f_{A,p}(p_s)]ds 
+ \int_{0}^t e^{r(t-s)} \zeta_s^{\mathsf{T}} \Sigma_s(Y_s)dB_s.
$
Thus, the principal's problem can be rewritten as
\begin{align*}
\min_{p_t\in\mathcal{P},\zeta_t}\ \mathbb{E}&\int_0^T e^{-rt} \left( f_{P,Y}(Y_t) + f_{P,p}(p_t) - e^{-r(T-t)}p_t\right)dt\\
&+ e^{-rT} \Big[h_P(Y_T)+h_A^{-1}\big(e^{rT}\underline{J}_A-\int_{0}^T e^{r(T-s)}f_{A,E}\big(f_{A,E}'^{-1}(\zeta_s)\big)ds \\
&+\int_{0}^T e^{r(T-s)}f_{A,p}(p_s)ds + \int_{0}^T e^{r(T-s)} \zeta_s^{\mathsf{T}} \Sigma_s(Y_s)dB_s\big) \Big]\\
\mathrm{such\ that}&\quad  dY_t = AY_tdt - f_{A,E}'^{-1}(\zeta_t) dt + \Sigma_t(Y_t) dB_t,\ Y_0 = y_0.
\end{align*}
Then, the decomposition of the problem follows naturally. \qed
\end{proof}

\textit{Remark:} $\zeta_t$ can be regarded as an \textit{estimation variable} since it determines the anticipated effort $E_t^*$. The payment $p_t$ is a \textit{control variable} that manipulates the risk manager's incentives and is determined at the control phase. Under appropriate conditions, these two estimation and control variables can be designed in a separate manner, yielding a separation principle in dynamic contract design for systemic risk management. 

To obtain more insights, we next focus on a class of models where the value function of the principal and the ODMD can be explicitly characterized.

\subsection{ODMD in LQ Setting}\label{LQ_case_section}

In the LQ setting, the cost functions take forms as $f_{A,E}(E_t) = \frac{1}{2}E_t^{\mathsf{T}}R_tE_t$, and $f_{A,p}(p_t) = \delta_A p_t$, where  $R_t$ is a positive-definite $N\times N$-dimensional symmetric matrix and $\delta_A$ is a positive constant. Then we obtain 
\begin{equation}
E_t^* = f_{A,E}'^{-1}(\zeta_t) = R_t^{-1}\zeta_t.
\end{equation}
Further, we consider $h_P(Y_T) = \rho^{\mathsf{T}}Y_T$, where $\rho\in\mathbb{R}_{+}^N$ maps the cyber risks to monetary loss, and $f_P(t,Y_t,p_t)=\rho^{\mathsf{T}}Y_t+\delta_P p_t$, where $\delta_P$ is a positive constant. In addition, $h_A(M_T) = -M_T$ and $\Sigma_t(Y_t) = D_t \cdot diag(Y_t)$, where $D_t\in\mathbb{R}^{N\times N}$ and `$diag$' is a diagonal operator. The principal's problem becomes:
\begin{align*}
\min_{p_t\in\mathcal{P},\zeta_t}\ \mathbb{E}&\int_0^T e^{-rt} ( \rho^{\mathsf{T}}Y_t+\delta_P p_t - e^{-r(T-t)}p_t)dt + e^{-rT}(\rho^{\mathsf{T}}Y_T-h_T)\\
\mathrm{such\ that}\quad &  dY_t = (AY_t - R_t^{-1}\zeta_t) dt + D_t\cdot diag(Y_t) dB_t,\ Y_0 = y_0,\\
&dh_t = \left(rh_t - \frac{1}{2}\zeta_t^{\mathsf{T}} R_t^{-1} \zeta_t+\delta_A p_t\right)dt + \zeta_t^{\mathsf{T}} \Sigma_t(Y_t)dB_t,\ h_0 = \underline{J}_A.
\end{align*}

The principal aims to maximize $h_T$, which is equivalent to minimizing the agent's total revenue based on the relationship $h_T=-M_T$. The principal also considers the agent's participation constraint by setting $h_0=W_0=\underline{J}_A$, ensuring that the cyber risk manager has sufficient incentive to fulfill the task.

Since $
e^{-rT}h_T = h_0-\int_{0}^T e^{-rs} \left(\frac{1}{2}\zeta_s^{\mathsf{T}} R_s^{-1} \zeta_t-\delta_A p_s \right) ds + \int_{0}^T e^{-rs} \zeta_s^{\mathsf{T}} D_s\cdot diag(Y_s)dB_s
$, the principal's problem can be rewritten as:
\begin{align*}
\min_{p_t\in\mathcal{P},\zeta_t}\ \mathbb{E}&\int_0^T e^{-rt} \Big( \rho^{\mathsf{T}}Y_t + (\delta_P-\delta_A)p_t - e^{-r(T-t)}p_t+\frac{1}{2}\zeta_t^{\mathsf{T}} R_t^{-1} \zeta_t\Big)dt\\
&\qquad\qquad\qquad\qquad\qquad\qquad\qquad\qquad + e^{-rT} \rho^{\mathsf{T}}Y_T-\underline{J}_A\\
\mathrm{such\ that}\quad & dY_t = (AY_t - R_t^{-1}\zeta_t) dt + D_t\cdot diag(Y_t) dB_t,\ Y_0 = y_0.
\end{align*}
According to Theorem \ref{separation_p}, the separation principle holds in the LQ case. To determine the optimal $p_t$, we solve the following unconstrained optimization problem:
$$\min_{p_t\in\mathcal{P}}\ \int_0^T e^{-rt}(\delta_P-\delta_A- e^{-r(T-t)})p_tdt.$$ 
Depending on the values of parameters $\delta_P$ and $\delta_A$, we obtain the following results. If $\delta_P-\delta_A\geq 1$, there is no intermediate payment, i.e., $p_t=0$, $\forall t\in[0,T]$. In this regime, the principal has a higher valuation on the monetary payment than the agent does. In other words, the agent is relatively hard to be incentivized to do the risk management. When $\delta_P-\delta_A\leq 0$, i.e., the principal focuses more on the cyber risk deduction rather than the expenditure on  incentivizing the agent,  the optimal $p_t$ is positively unbounded. However, in this regime, the terminal payment $c_T$ is negatively unbounded based on \eqref{dc_simplified}. This contract corresponds to the scenario where the risk manager receives a large amount of intermediate payment during the task while returning it to the principal after finishing the task which is not practical. Under $0<\delta_P-\delta_A< 1$, the intermediate compensation is either 0 or unbounded depending on the time index.
Hence, to design a practical contract, we focus on the regime in which the intermediate payment is zero, and the risk manager receives a positive terminal payment $c_T$.

To obtain the optimal $\{\zeta^*_t\}_{0\leq t\leq T}$, we assume that the process $\zeta_t$, ${t\in[0,T]}$, is non-anticipative, which can be verified later after obtaining the solution $\zeta_t^*$. Then, the problem can be further simplified to:
\begin{align*}
\min_{\zeta_t}\ \mathbb{E}&\int_0^T e^{-rt}\left( \rho^{\mathsf{T}}Y_t + \frac{1}{2}\zeta_t^{\mathsf{T}} R_t^{-1} \zeta_t\right) dt+ e^{-rT}\rho^{\mathsf{T}} Y_T - \underline{J}_A\\
\mathrm{such\ that}\quad & dY_t = (AY_t - R_t^{-1}\zeta_t) dt + D_t\cdot diag(Y_t) dB_t,\ Y_0 = y_0.
\end{align*}

The following theorem provides the optimal solution $\zeta_t^*$.
\begin{theorem}\label{optimal_zeta_LQ}
In the LQ case, the optimal solution to the principal's problem is given by 
\begin{equation}
\zeta_t^* = K_t,
\end{equation} where $K_t$ satisfies, and is the unique solution to 
\begin{equation}\label{sol_K}
\dot{K}_t+(A-rI)^{\mathsf{T}}K_t+\rho = 0,\ K_T = \rho. 
\end{equation}
Furthermore, the minimum cost of the principal is given by 
\begin{equation}
J_p^* = K_0^{\mathsf{T}}y_0+m_0 - \underline{J}_A,
\end{equation}
where $m_0$ is obtained uniquely from 
\begin{equation}\label{sol_m}
\dot{m}_t-rm_t-\frac{1}{2}K_t^{\mathsf{T}}R_t^{-1}K_t=0,\ m_T=0.
\end{equation}
\end{theorem}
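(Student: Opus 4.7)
The plan is to solve the reformulated LQ problem by dynamic programming and exploit the affine structure of every piece of the problem. Observe that after eliminating $p_t$, the running cost $\rho^{\mathsf{T}} Y_t + \tfrac{1}{2}\zeta_t^{\mathsf{T}} R_t^{-1}\zeta_t$ is affine in $Y_t$, the drift $AY_t - R_t^{-1}\zeta_t$ is affine in $Y_t$, the terminal cost $\rho^{\mathsf{T}} Y_T$ is linear in $Y_T$, and the diffusion $D_t\,\mathrm{diag}(Y_t)$ appears only multiplicatively with $\nabla_{yy}^2 V$. This strongly suggests the ansatz $V(t,y)=e^{-rt}\bigl(K_t^{\mathsf{T}} y + m_t\bigr)$ for the value function associated with the $\{Y_t\}$-state problem (dropping the constant $-\underline{J}_A$, which is added back at the end). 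The terminal matching $V(T,y)=e^{-rT}\rho^{\mathsf{T}} y$ immediately forces $K_T = \rho$ and $m_T = 0$.

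Next, I would write down the HJB equation
\begin{equation*}
0 \;=\; \frac{\partial V}{\partial t} + \min_{\zeta_t}\!\Bigl\{\nabla_y V\!\cdot\!\bigl(AY_t - R_t^{-1}\zeta_t\bigr) + \tfrac{1}{2}\mathrm{tr}\!\bigl(\Sigma_t\Sigma_t^{\mathsf{T}}\nabla_{yy}^2 V\bigr) + e^{-rt}\bigl(\rho^{\mathsf{T}} Y_t + \tfrac{1}{2}\zeta_t^{\mathsf{T}} R_t^{-1}\zeta_t\bigr)\Bigr\},
\end{equation*}
and plug in the ansatz. Because $\nabla_{yy}^2 V\equiv 0$, the entire trace (and hence the dependence on $D_t$ and $\mathrm{diag}(Y_t)$) drops out of the HJB, leaving a pointwise quadratic in $\zeta_t$. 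The first-order condition yields $\zeta_t^* = K_t$, which is deterministic and therefore automatically non-anticipative, vindicating the assumption made before the theorem. Substituting $\zeta_t^*$ back, the bracketed Hamiltonian becomes $-\tfrac{1}{2}K_t^{\mathsf{T}} R_t^{-1} K_t$ after simplification.

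Collecting the coefficients of $Y_t$ and the constant terms on the two sides of the resulting identity produces two decoupled linear ODEs: the $Y_t$-coefficient gives $\dot{K}_t + (A-rI)^{\mathsf{T}} K_t + \rho = 0$ with $K_T=\rho$, and the constant part gives $\dot{m}_t - r m_t - \tfrac{1}{2}K_t^{\mathsf{T}} R_t^{-1} K_t = 0$ with $m_T=0$. Both are backward linear ODEs with continuous (hence locally Lipschitz) coefficients, so standard ODE theory delivers unique global solutions on $[0,T]$. The optimal cost then follows from $J_p^* = V(0,y_0) - \underline{J}_A = K_0^{\mathsf{T}} y_0 + m_0 - \underline{J}_A$.

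The main obstacle is the verification step: I must ensure that the candidate $\zeta_t^* = K_t$ is admissible and that the formal dynamic-programming computation is rigorous. Applying It\^o's formula to $e^{-rt}(K_t^{\mathsf{T}} Y_t + m_t)$ between $0$ and $T$ and taking expectations, the diffusion term $\int_0^T e^{-rt} K_t^{\mathsf{T}} D_t\,\mathrm{diag}(Y_t)\, dB_t$ must be a true martingale; this follows from the boundedness of $K_t$ (continuous on $[0,T]$), the assumed bound on $\Sigma_t(Y_t)$, and the standard moment estimate for linear SDEs with affine drift that yields $\sup_{t\in[0,T]}\mathbb{E}\|Y_t\|^2<\infty$. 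For any other admissible $\zeta_t$, the same It\^o expansion produces a nonnegative pointwise surplus in the Hamiltonian, showing $V(0,y_0)$ is a lower bound attained by $\zeta_t^* = K_t$. This completes the verification and yields both claims of the theorem.
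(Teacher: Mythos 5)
Your proposal is correct and follows essentially the same route as the paper: dynamic programming on the reformulated LQ problem, a polynomial ansatz for the value function, the first-order condition $\zeta_t^*=K_t$, and coefficient matching to obtain the two backward linear ODEs. The only (immaterial) differences are that the paper starts from a quadratic ansatz and deduces $S_t\equiv 0$ from the Riccati equation with terminal condition $S_T=0$, whereas you posit linearity directly from the affine structure, and that you add a verification/martingale argument that the paper leaves implicit.
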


\begin{proof}
Without loss of generality, we solve the optimal control problem by ignoring the constant term $\underline{J}_A$ in the cost function.
The HJB equation
\begin{equation}
\begin{split}
\min_{\zeta_t}\Big[  \frac{1}{2}tr \left(\frac{\partial^2 V_p}{\partial Y_t^2}D_t\cdot diag(Y_t) \cdot diag(Y_t) D_t^{\mathsf{T}} \right)
+ \frac{\partial V_p}{\partial Y_t}\left( AY_t-R_t^{-1}\zeta_t \right) \\
+ \rho^{\mathsf{T}}Y_t + \frac{1}{2}\zeta_t^{\mathsf{T}} R_t^{-1} \zeta_t \Big]+\frac{\partial V_p}{\partial t} = rV_p,\\
V_p(T,Y_T) = \rho^{\mathsf{T}} Y_T,
\end{split}
\end{equation}
yields the  first-order condition $\zeta_t^*=\frac{\partial V_p}{\partial Y_t}$. Assume that the value function takes the form: $V_p(t,Y) = \frac{1}{2}Y^{\mathsf{T}}S_tY+K_t^{\mathsf{T}}Y+m_t$, where $S_t$ is an $N\times N$ symmetric matrix with continuously differentiable entries, $K_t$ is a continuously differentiable $N$-dimensional vector, and $m_t$ is a continuously differentiable function. Then, we obtain $\zeta_t^* = S_tY_t+K_t$. Substituting $\zeta_t^*$ into the HJB equation yields
\begin{equation}\label{HJB_LQR}
\begin{split}
 \frac{1}{2}tr \left(S_tD_t\cdot diag(Y_t) \cdot diag(Y_t) D_t^{\mathsf{T}}\right)+ (S_tY_t+K_t)^{\mathsf{T}} ( AY_t
 -R_t^{-1}S_tY_t-R_t^{-1}K_t )\\ + \rho^{\mathsf{T}}Y_t + \frac{1}{2}(S_tY_t+K_t)^{\mathsf{T}}R_t^{-1}(S_tY_t+K_t)  \\
 = r\left(\frac{1}{2}Y_t^{\mathsf{T}}S_tY_t+K_t^{\mathsf{T}}Y_t+m_t\right)-\frac{1}{2}Y_t^{\mathsf{T}}\dot{S}_tY_t-\dot{K}_t^{\mathsf{T}}Y_t-\dot{m}_t,\\
V_p(T,Y_T) = \rho^{\mathsf{T}} Y_T.
\end{split}
\end{equation}
Denote by $I$ the $N$-dimensional identity matrix and by $e_i$ the $N$-dimensional vector whose $i$-th element is 1 and the others are zero. Matching the coefficients in \eqref{HJB_LQR} further yields the following coupled ordinary differential equations (ODEs):
\begin{align}
\dot{S}_t+S_tA+A^{\mathsf{T}}S_t-rS_t-S_tR_t^{-1}S_t+\frac{1}{2}\sum_{i=1}^N\left(e_i e_i^{\mathsf{T}}D_t^{\mathsf{T}} S_tD_t\right)&=0,\ S_T=0,\label{sol_S}\\
\dot{K}_t+(A-R_t^{-1}S_t-rI)^{\mathsf{T}}K_t+\rho = 0,&\ K_T = \rho,\label{sol_K_ax}\\
\dot{m}_t-rm_t-\frac{1}{2}K_t^{\mathsf{T}}R_t^{-1}K_t=0,&\ m_T=0\label{sol_m_ax}.
\end{align}
Here, \eqref{sol_S} is a matrix Riccati equation. However, based on the terminal condition $S_T=0$, we see that the unique solution to \eqref{sol_S} is $S_t=0$, $\forall t$. Therefore, a linear value function $V_p(t,Y)=K_t^{\mathsf{T}}Y+m_t$ is sufficient. Then, the ODEs \eqref{sol_K_ax} and \eqref{sol_m_ax} can be rewritten as \eqref{sol_K} and \eqref{sol_m}, respectively, which being linear admit unique solutions. \qed
\end{proof}

We then obtain the explicit form of optimal dynamic contract in the subsequent lemma.
\begin{lemma}\label{lem_2}
In the LQ case, the optimal dynamic contract designed by the principal is given by
\begin{equation}\label{dc_lem_2}
\begin{split}
dc_t &= \left( rc_t+ \frac{1}{2}K_t^{\mathsf{T}}R_t^{-1}K_t \right) dt -K_t^{\mathsf{T}} \left( dY_t - AY_tdt + R_t^{-1}K_t dt\right)\\
& = \left( rc_t- \frac{1}{2}K_t^{\mathsf{T}}R_t^{-1}K_t \right) dt -K_t^{\mathsf{T}} \left( dY_t - AY_tdt\right),
\end{split}
\end{equation}
with $c_0=-\underline{J}_A>0$, and $K_t$ is given by \eqref{sol_K}. The intermediate payment $p_t$ degenerates to zero, and the anticipated effort of the agent under the optimal contract is $E_t^* = R_t^{-1}K_t$.
\end{lemma}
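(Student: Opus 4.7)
The plan is to prove Lemma \ref{lem_2} by direct substitution into the general dynamic contract form \eqref{dc_simplified}, using the LQ specializations already developed in Section \ref{LQ_case_section} and the optimal $\zeta_t^*$ from Theorem \ref{optimal_zeta_LQ}. There is no new optimization to perform here; the lemma simply collects the pieces into a closed-form expression for $dc_t$.

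First, I specialize \eqref{dc_simplified} to the LQ setting where $h_A(M_t) = -M_t$, hence $h_A'(M_t) = -1$ and $h_A''(M_t) = 0$. This immediately kills the risk-compensation term and simplifies the discount and cost terms, yielding
\begin{equation*}
dc_t \;=\; rM_t\,dt \;+\; f_A(t,p_t,E_t^*)\,dt \;-\; p_t\,dt \;-\; \zeta_t^{\mathsf{T}}\Sigma_t(Y_t)\,dB_t.
\end{equation*}
Next, I invoke the regime analysis preceding Theorem \ref{optimal_zeta_LQ}, which justifies restricting to the practical case $p_t^* = 0$ on $[0,T]$. With $p_t \equiv 0$, the income evolution \eqref{M_t} gives $dM_t = dc_t$, so $M_t = c_t$ once I pin down the common initial value. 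The IR equality $h_A(M_0) = \underline J_A$ (together with $W_0 = \underline J_A$) fixes $c_0 = M_0 = -\underline J_A$, and since $\underline J_A \le 0$ this gives $c_0 > 0$.

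Then I plug in $\zeta_t^* = K_t$ (Theorem \ref{optimal_zeta_LQ}) and $E_t^* = f_{A,E}'^{-1}(\zeta_t^*) = R_t^{-1}K_t$, so that
\begin{equation*}
f_A(t,0,E_t^*) \;=\; \tfrac{1}{2}(R_t^{-1}K_t)^{\mathsf{T}} R_t (R_t^{-1}K_t) \;=\; \tfrac{1}{2}K_t^{\mathsf{T}} R_t^{-1} K_t.
\end{equation*}
Substituting yields the preliminary form $dc_t = (r c_t + \tfrac{1}{2}K_t^{\mathsf{T}} R_t^{-1} K_t)\,dt - K_t^{\mathsf{T}}\Sigma_t(Y_t)\,dB_t$.

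Finally, to recover the two equivalent expressions in \eqref{dc_lem_2}, I rewrite the Brownian increment using the risk SDE \eqref{cyber_risk} under the anticipated effort $E_t^*$: namely $\Sigma_t(Y_t)\,dB_t = dY_t - AY_t\,dt + R_t^{-1}K_t\,dt$. This directly matches the first line of \eqref{dc_lem_2}; algebraically absorbing the $R_t^{-1}K_t\,dt$ piece into the drift yields the second line. The only genuine substance in the argument is the optimality $p_t^*=0$, which is not a calculation but a consequence of the earlier parameter-regime discussion; once that is granted, the remainder is purely mechanical, so I do not anticipate a serious obstacle beyond careful bookkeeping of signs (especially the sign flips from $h_A'=-1$).
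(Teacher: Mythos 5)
Your proposal is correct and follows essentially the same route as the paper, whose proof of Lemma \ref{lem_2} is just the one-line citation ``follows from Theorems \ref{IC_thm}, \ref{optimal_zeta_LQ}, and \eqref{dc_simplified}''; your write-up is the faithful expansion of exactly that chain (specializing $h_A'=-1$, $h_A''=0$, invoking $p_t^*=0$ from the regime analysis, substituting $\zeta_t^*=K_t$ and $E_t^*=R_t^{-1}K_t$, and rewriting $\Sigma_t(Y_t)\,dB_t$ via the risk SDE). The sign bookkeeping and the identification $M_t=c_t$ from $h_A(M_0)=h_A(c_0)=\underline{J}_A$ all check out.
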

\begin{proof}
The result follows from Theorems \ref{IC_thm}, \ref{optimal_zeta_LQ}, and \eqref{dc_simplified}. \qed
\end{proof}
\textit{Remark:} As shown in Lemma \ref{lem_2}, the cyber risk volatility $\Sigma_t(Y_t)$ does not impact the optimal dynamic contract design, since the principal's expected cost is linear in the systemic risk $Y_t$. When one of the functions $f_p$, $h_A$ and $h_p$ is not linear, the volatility $\Sigma_t(Y_t)$ will play a role in the contract design in solving the problem presented in Theorem \ref{P_problem_thm}.

Even though the optimal dynamic contract does not depend on the cyber risk volatility in the LQ case, the risk volatility influences the real compensation during contract implementation.

\begin{corollary}\label{volatility_impact}
The terminal compensation of risk manager has a larger variance when there are more complex interdependencies of risk uncertainties between nodes.
\end{corollary}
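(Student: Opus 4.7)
The plan is to integrate the linear SDE for $c_t$ obtained in Lemma \ref{lem_2} and then read off the variance of $c_T$ via Itô's isometry. Substituting $dY_t - AY_t dt = -R_t^{-1}K_t\,dt + D_t\cdot diag(Y_t)\,dB_t$ into the first line of \eqref{dc_lem_2} cancels the deterministic $K_t^{\mathsf{T}}R_t^{-1}K_t$ cross-term and isolates the diffusion contribution, yielding the equivalent SDE
\begin{equation*}
dc_t = \Bigl(rc_t + \tfrac{1}{2}K_t^{\mathsf{T}}R_t^{-1}K_t\Bigr)dt - K_t^{\mathsf{T}}D_t\cdot diag(Y_t)\,dB_t.
\end{equation*}
This is linear in $c_t$, so the integrating factor $e^{-rt}$ gives the closed form
\begin{equation*}
c_T = e^{rT}c_0 + \int_0^T e^{r(T-t)}\tfrac{1}{2}K_t^{\mathsf{T}}R_t^{-1}K_t\,dt - \int_0^T e^{r(T-t)}K_t^{\mathsf{T}}D_t\cdot diag(Y_t)\,dB_t,
\end{equation*}
in which only the last summand is stochastic.

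Next, I would apply Itô's isometry to the stochastic integral to obtain
\begin{equation*}
\mathrm{Var}(c_T) = \mathbb{E}\!\int_0^T e^{2r(T-t)} \sum_{j=1}^N (Y_t^j)^2 \Bigl(\sum_{i=1}^N K_t^i D_t^{ij}\Bigr)^2 dt.
\end{equation*}
This expression is the key structural formula: the $j$-th term aggregates, in a squared-weighted-sum, all the couplings from the other nodes' Brownian shocks into node $j$'s diffusion. Because $A$ is entrywise non-negative and $\rho \geq 0$, solving \eqref{sol_K} backward via $K_t = e^{(A-rI)^{\mathsf{T}}(T-t)}\rho + \int_t^T e^{(A-rI)^{\mathsf{T}}(s-t)}\rho\,ds$ and using that $(A-rI)^{\mathsf{T}}$ is Metzler yields $K_t \geq 0$ componentwise, so under the sign convention $D_t^{ij}\geq 0$ the inner sum strictly grows when off-diagonal magnitudes grow.

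To formalize ``more complex interdependencies,'' I would fix the diagonal of $D_t$ and compare two admissible volatility matrices $D_t$ and $\tilde{D}_t$ satisfying $|\tilde{D}_t^{ij}| \geq |D_t^{ij}|$ for every $i \neq j$ (equivalently, a scalar coupling parameter $\lambda \in [0,1]$ scaling the off-diagonal block, so $D_t(\lambda) = D_t^{\mathrm{diag}} + \lambda D_t^{\mathrm{off}}$); then termwise monotonicity of $\bigl(\sum_i K_t^i D_t^{ij}\bigr)^2$ in $\lambda$ delivers the comparison of variances after taking expectations.

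The principal obstacle lies not in the computation but in handling the implicit dependence of the state $Y_t$ on $D_t$: the squared risk levels $(Y_t^j)^2$ inside the integrand also change with the coupling. I would address this by a monotone-coupling argument on the driving Brownian motion, noting that the drift of $Y_t$ in \eqref{cyber_risk} is independent of $D_t$ (only $E_t^*=R_t^{-1}K_t$ and $A$ enter), so richer off-diagonal diffusion can only enlarge $\mathbb{E}[(Y_t^j)^2]$. Combined with the monotonicity of the explicit weight $(\sum_i K_t^i D_t^{ij})^2$, both factors of the integrand move in the same direction, making $\mathrm{Var}(c_T)$ monotonic in the strength/density of the inter-node uncertainty couplings, which is the content of the corollary.
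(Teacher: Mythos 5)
Your derivation is sound as far as it goes, and it is worth saying up front that it goes further than the paper does: the paper offers no proof of Corollary~\ref{volatility_impact} at all, only the numerical illustration in Section~\ref{examples} (Fig.~\ref{payment_risk}) comparing $\Sigma_t(Y_t)=I$ with a cyclically coupled volatility matrix. Your route --- substitute the implemented dynamics into \eqref{dc_lem_2} to isolate the diffusion term $-K_t^{\mathsf{T}}D_t\cdot diag(Y_t)\,dB_t$, integrate with the factor $e^{-rt}$, and apply It\^o's isometry --- is correct, and the resulting identity
\begin{equation*}
\mathrm{Var}(c_T)=\mathbb{E}\int_0^T e^{2r(T-t)}\sum_{j=1}^N (Y_t^j)^2\Bigl(\textstyle\sum_{i=1}^N K_t^i D_t^{ij}\Bigr)^2 dt
\end{equation*}
is exactly the structural statement the corollary is gesturing at. The observation that $K_t\geq 0$ componentwise (from \eqref{sol_K}, since $A$ is entrywise nonnegative and $\rho\geq 0$) is also correct and is what makes the weight $\bigl(\sum_i K_t^i D_t^{ij}\bigr)^2$ monotone in the off-diagonal magnitudes of a nonnegative $D_t$. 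This is a genuine strengthening of what the paper establishes, which is only empirical.

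The one step that would not survive scrutiny as written is the treatment of the implicit dependence of $(Y_t^j)^2$ on $D_t$. A ``monotone coupling on the driving Brownian motion'' is not available here: two multidimensional diffusions with the same drift but different diffusion matrices do not admit a pathwise ordering in general, so you cannot compare $Y_t$ trajectory by trajectory. What you actually need --- and what does work --- is a comparison at the level of second moments: $P_t=\mathbb{E}[Y_tY_t^{\mathsf{T}}]$ satisfies a linear Lyapunov-type ODE whose source term $\mathbb{E}[D_t\cdot diag(Y_t)^2 D_t^{\mathsf{T}}]$ is entrywise nondecreasing in $|D_t^{ij}|$, and since $A$ is entrywise nonnegative the propagator $X\mapsto e^{At}Xe^{A^{\mathsf{T}}t}$ preserves entrywise nonnegativity; a Gronwall/monotone-iteration argument then gives $\mathbb{E}[(Y_t^j)^2]$ nondecreasing in the off-diagonal coupling. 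With that substitution your argument closes, and both factors of the integrand move in the same direction, which is the content of the corollary. You should also state explicitly that the claim is being proved under the sign normalization $D_t^{ij}\geq 0$ and for the specific ordering ``fix the diagonal, enlarge the off-diagonal entries,'' since the corollary itself never defines ``more complex interdependencies''; your formalization is consistent with the paper's two test cases, but it is a choice you are making, not one the paper makes for you.
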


Corollary \ref{volatility_impact} will further be illustrated through case studies in Section \ref{examples}.

\section{Benchmark Scenario: Systemic Cyber Risk Management under Full Information }\label{benchmark}
In the full-information case, the principal observes the efforts that the cyber risk manager implements. We first solve the team problem in which the agent cooperates with the principal. To that end, the principal's cost under the team optimal solution is the best that he can achieve. Then, we aim to design a dynamic contract mechanism under which the agent will adopt the same policy as the team optimal one. In the cooperative case, the contract only needs to guarantee the participation constraint. Then, the principal's problem can be formulated as follows:
\begin{align*}
\mathrm{(O-B)}:\
 \min_{p_t\in\mathcal{P},c_T,E_t\in\mathcal{E}}&\ \mathbb{E}\int_0^T e^{-rt} f_P(t,Y_t,p_t)dt + e^{-rT}\left( c_T+ h_P(Y_T)\right)\\
\mathrm{such\ that}\quad    & dY_t = AY_tdt - E_t dt + \Sigma_t(Y_t) dB_t,\ Y_0 = y_0,\\
& J_A\left(\{E_t^*\}_{0\leq t\leq T};\{p_t\}_{0\leq t\leq T},c_T\right) = \underline J_A.
\end{align*} 

As in the asymmetric information scenario, it is more convenient to deal with the dynamics of the cyber risk manager's expected cost. By designing the contract, the principal only needs to ensure the participation of the agent. Then, the principal's problem can be rewritten as follows:
\begin{align*}
\mathrm{(O-B')}:\
 \min_{p_t\in\mathcal{P},\zeta_t,E_t\in\mathcal{E}}\ &\mathbb{E}\int_0^T e^{-rt}  \Big( f_P(t,Y_t,p_t) - e^{-r(T-t)}p_t\Big) dt \\
&\qquad\qquad+ e^{-rT} \left(h_P\big(Y_T\big)+h_A^{-1}(h_T)\right)\\
\mathrm{such\ that}\quad  dY_t =& AY_tdt - E_t dt + \Sigma_t(Y_t) dB_t,\ Y_0 = y_0,\\
dh_t =& rh_tdt - f_A\big(t,p_t,E_t\big)dt + \zeta_t^{\mathsf{T}} \Sigma_t(Y_t)dB_t,\ h_0 = \underline{J}_A.
\end{align*} 
With the full observation of $Y_t$ and $E_t$, $\zeta_t$ can be chosen freely, and $E_t$ can be seen as a control variable of the principal. Note that the IC constraint \eqref{IC_simplified} does not enter into $\mathrm{(O-B')}$. In addition, the equivalent terminal payment process $c_t$ admits the same form as \eqref{dc_simplified}. $\mathrm{(O-B')}$ is a standard stochastic optimal control problem which can be solved efficiently.

To quantify the efficiency of dynamic contract designed in Section \ref{principal_problem}, we have the following definition.

\begin{definition}[Information Rent] Denote the solutions to $\mathrm{(O-A)}$ and $\mathrm{(O-P)}$ by $\{E_t^*\}_{0\leq t\leq T}$ and $\{\{p_t^*\}_{0\leq t\leq T},c_T^*\}$, respectively. Further, denote the solution to $\mathrm{(O-B)}$ by $\{\{p_t^b\}_{0\leq t\leq T},c_T^b,\{E_t^b\}_{0\leq t\leq T}\}$. Then, the information rent is given by
\begin{equation}
I_R = J_P(\{p_t^*\}_{0\leq t\leq T},c_T^*)-J_P(\{p_t^b\}_{0\leq t\leq T},c_T^b).
\end{equation}
\end{definition}

Intuitively, information rent quantifies the difference between the principal's costs with optimal mechanisms designed under incomplete and full information.

We have following result on information rent.

\begin{corollary}
The optimal cost of the principal under full information is no larger than the one under asymmetric information. Hence, $I_R\geq 0$. 
\end{corollary}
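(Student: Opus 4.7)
The plan is to establish $I_R \geq 0$ by a standard constraint-relaxation argument: the full-information program $\mathrm{(O\text{-}B)}$ is a relaxation of the asymmetric-information program $\mathrm{(O\text{-}P)}$, so its minimum cannot exceed that of $\mathrm{(O\text{-}P)}$. The core observation is that the two problems share the same principal cost functional \eqref{J_P} and the same systemic risk dynamics \eqref{cyber_risk}; they differ only in the admissible set of controls. In $\mathrm{(O\text{-}P)}$, the principal must restrict attention to contracts whose induced effort obeys the IC constraint \eqref{IC_simplified} in addition to the IR constraint \eqref{IR_eqn}, whereas in $\mathrm{(O\text{-}B)}$ the principal may jointly choose the payment process, terminal compensation, and effort process subject only to IR.

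First, I would take the triple $(\{p_t^*\}_{0\leq t\leq T}, c_T^*, \{E_t^*\}_{0\leq t\leq T})$ produced by the ODMD of Section \ref{principal_problem}, where $\{E_t^*\}$ is the agent's incentive-compatible best response under the optimal contract $(\{p_t^*\}, c_T^*)$. By construction, this triple satisfies the stochastic dynamics \eqref{cyber_risk}, and by Lemma (the IR equality) together with the definition of $\{E_t^*\}$, it also satisfies the IR constraint required in $\mathrm{(O\text{-}B)}$. Hence the triple is a feasible candidate for $\mathrm{(O\text{-}B)}$, and because the principal's objective \eqref{J_P} depends only on $(\{p_t\}, c_T, \{Y_t\})$, its value at this candidate equals $J_P(\{p_t^*\}_{0\leq t\leq T}, c_T^*)$.

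Next, since the minimum in $\mathrm{(O\text{-}B)}$ is taken over a set that contains this candidate, I would conclude
\begin{equation*}
J_P(\{p_t^b\}_{0\leq t\leq T}, c_T^b) \;\leq\; J_P(\{p_t^*\}_{0\leq t\leq T}, c_T^*),
\end{equation*}
which, by the definition of the information rent, immediately gives $I_R \geq 0$.

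I do not anticipate a genuine technical obstacle: the argument is a one-line feasibility comparison, and the only point that deserves care is verifying that the IC condition is indeed absent in $\mathrm{(O\text{-}B)}$ so that the feasible set is truly enlarged. This last observation is exactly what motivates the subsequent discussion in the paper of the conditions under which the inequality collapses to equality, producing the certainty equivalence principle highlighted in the introduction.
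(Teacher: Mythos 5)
Your proof is correct and takes essentially the same route as the paper: both arguments observe that the full-information program is a relaxation of the asymmetric-information one (the IC constraint is dropped, so the feasible set only enlarges), whence its optimal value cannot be larger. Your version is somewhat more explicit in exhibiting the ODMD triple as a feasible point of $\mathrm{(O\text{-}B)}$ and checking the IR equality, but the underlying idea is identical to the paper's one-line feasible-set comparison.
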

\begin{proof}
Comparing with the optimal $\{E_t^*\}_{0\leq t\leq T}$ in $\mathrm{(O-P')}$, the implemented effort $\{E_t^b\}_{0\leq t\leq T}$ in $\mathrm{(O-B')}$ does not depend on the variables $\zeta_t$ and $p_t$. Thus, $\mathrm{(O-B')}$ admits a larger feasible solution space, which yields the result. \qed
\end{proof}

\subsection{LQ Setting: Certainty Equivalence Principle}
To further characterize the optimal contracts under full information and quantify the information rent, we investigate a class of special scenarios. Specifically,
we take the functions to have the same forms as in Section \ref{LQ_case_section}. The principal's problem can then be written as
\begin{align*}
\min_{p_t\in\mathcal{P},E_t\in\mathcal{E}}\quad \mathbb{E}&\int_0^T e^{-rt} \Big( \rho^{\mathsf{T}}Y_t + (\delta_P-\delta_A)p_t - e^{-r(T-t)}p_t+\frac{1}{2}E_t^{\mathsf{T}} R_t E_t\Big)dt\\
&\qquad\qquad\qquad\qquad\qquad\qquad\qquad\qquad + e^{-rT} \rho^{\mathsf{T}}Y_T-\underline{J}_A\\
\mathrm{such\ that}\quad & dY_t = (AY_t - E_t) dt + D_t\cdot diag(Y_t) dB_t,\ Y_0 = y_0.
\end{align*}
Note that $\zeta_t$ does not appear in the optimization problem. However, $\zeta_t$ enters the designed contract \eqref{dc_simplified} through the term $-\zeta_t^{\mathsf{T}} \Sigma_t(Y_t)dB_t$. In the long term contracting when $T$ is relatively large, the expected value of $-\zeta_t^{\mathsf{T}} \Sigma_t(Y_t)dB_t$ is zero which is irrelevant with $\zeta_t$. Hence, the principal can set $\zeta_t=0$ to reduce the contract complexity.

Similar to the analysis in Section \ref{LQ_case_section}, we focus on the regime where the intermediate payment flow $p_t$ is zero, to avoid the unrealistic situation of negative terminal payment. We obtain the following lemma characterizing the \textit{certainty equivalence principle}.

\begin{lemma}
In the LQ settings, $I_R = 0$ which reveals the certainty equivalence principle, i.e., the designed optimal contracts under the incomplete information are as efficient as those designed under complete information. 
\end{lemma}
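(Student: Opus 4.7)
The plan is to exploit a direct change of variables between the asymmetric-information and full-information LQ problems to show that, restricted to the same practical regime, they are literally the same optimization. First I would specialize the benchmark $\mathrm{(O\text{-}B')}$ to the LQ setting and, by the exact arguments already given for $\mathrm{(O\text{-}P')}$ in Section \ref{LQ_case_section}, restrict to $p_t\equiv 0$ (to avoid the unbounded or unrealistic regimes driven by $\delta_P-\delta_A$) and to $\zeta_t\equiv 0$ (the term $\zeta_t^{\mathsf{T}}\Sigma_t(Y_t)dB_t$ is the only place $\zeta_t$ enters, and its expectation is zero, so it is cost-neutral to the principal in expectation). This reduces the benchmark to
\begin{equation*}
\min_{E_t\in\mathcal{E}}\ \mathbb{E}\int_0^T e^{-rt}\!\left(\rho^{\mathsf{T}}Y_t+\tfrac{1}{2}E_t^{\mathsf{T}}R_tE_t\right)dt+e^{-rT}\rho^{\mathsf{T}}Y_T-\underline{J}_A
\end{equation*}
subject to $dY_t=(AY_t-E_t)dt+D_t\cdot diag(Y_t)dB_t$, $Y_0=y_0$.

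Next I would observe that the reduced asymmetric LQ problem in Section \ref{LQ_case_section} has the same structure but with decision variable $\zeta_t$, drift term $-R_t^{-1}\zeta_t$, and running cost $\tfrac{1}{2}\zeta_t^{\mathsf{T}}R_t^{-1}\zeta_t$. Since $R_t$ is positive definite, the map $E_t=R_t^{-1}\zeta_t$ is a bijection between admissible effort processes and admissible $\zeta_t$ processes, and it transforms the asymmetric drift into $-E_t$ and the asymmetric running cost into $\tfrac{1}{2}E_t^{\mathsf{T}}R_tE_t$, giving back the benchmark display above. Hence the two minima coincide and $I_R=0$. As a sanity check, the HJB solution with the linear ansatz $V_p^b(t,Y)=K_t^{\mathsf{T}}Y+m_t$ for the benchmark reproduces the same ODEs \eqref{sol_K} and \eqref{sol_m} as in Theorem \ref{optimal_zeta_LQ}, and the induced optimal effort $E_t^b=R_t^{-1}K_t$ matches the asymmetric-case anticipated effort $E_t^*=R_t^{-1}\zeta_t^*=R_t^{-1}K_t$ of Lemma \ref{lem_2}.

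The main subtlety to handle carefully is to justify that the asymmetric problem loses nothing by the parametrization $E_t^*=R_t^{-1}\zeta_t$. This is exactly what the IC characterization in Theorem \ref{IC_thm}, specialized to the quadratic $f_{A,E}$, provides: once that identity is used to eliminate the anticipated effort, $\zeta_t$ is the principal's only remaining estimation variable, and the asymmetric feasible set is in one-to-one correspondence, via $\zeta_t=R_tE_t$, with the full-information feasible set at identical cost. Put differently, the stochastic integral that the hidden-action structure writes into the contract \eqref{dc_simplified} carries nonzero pathwise risk but no expected cost, so it disappears from the principal's objective and the full-information certainty-equivalent policy is itself implementable under moral hazard. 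That is the certainty equivalence principle being asserted.
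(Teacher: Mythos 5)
Your proposal is correct and follows essentially the same route as the paper: the paper's proof is precisely the observation that under the substitution $E_t = R_t^{-1}\zeta_t$ the full-information LQ problem reduces to the asymmetric-information problem of Section \ref{LQ_case_section}, so the two minima coincide and $I_R=0$. Your version simply spells out the details the paper leaves implicit (the restriction to $p_t\equiv 0$ and $\zeta_t\equiv 0$ in the benchmark, the bijectivity of the map via positive definiteness of $R_t$, and the losslessness of the IC parametrization), which is a welcome elaboration but not a different argument.
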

\begin{proof}
By regarding $E_t$ as the role of $R_t^{-1}\zeta_t$, we see that the problem is reduced to the one in Section \ref{LQ_case_section}. Hence, the minimum cost of the principal in the full information case is the same as that under the incomplete information. \qed
\end{proof}

\textit{Remark:} When the agent's terminal cost function $h_A$ is not linear, $h_A^{-1}(h_T)$ will not be linear in $h_T$. Thus, the decision variable $\zeta_t$ remains in the principal's objective function. Then, the contract design under full information becomes more efficient as there is no dependency between $\zeta_t$ and $E_t$ introduced by the IC constraint.

In the LQ case, the team optimal contract is summarized as follows.
\begin{lemma}\label{lem_LQ_team}
In the LQ setting, the team optimal dynamic contract  is
\begin{equation}\label{lem_LQ_team_ct}
\begin{split}
dc_t^b &= \left( rc_t^b+ \frac{1}{2}K_t^{\mathsf{T}}R_t^{-1}K_t \right) dt,\\
E_t^b &= R_t^{-1}K_t,
\end{split}
\end{equation}
with $c_0^b=-\underline{J}_A>0$, and $K_t$ is given by \eqref{sol_K}. The intermediate payment is zero.
\end{lemma}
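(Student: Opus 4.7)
The plan is to reduce the full-information LQ contracting problem to the stochastic control problem already solved in Theorem \ref{optimal_zeta_LQ}, and then read off the contract dynamics from \eqref{dc_simplified}. Concretely, I would start from $\mathrm{(O-B')}$ specialized to the LQ primitives ($f_{A,E}(E_t)=\tfrac12 E_t^{\mathsf T} R_t E_t$, $f_{A,p}(p_t)=\delta_A p_t$, $h_A(M_T)=-M_T$, $h_P(Y_T)=\rho^{\mathsf T}Y_T$, $f_P=\rho^{\mathsf T}Y_t+\delta_P p_t$), integrate out $h_T$ using the linearity of $h_A^{-1}$, and observe that $\zeta_t$ no longer couples to $E_t$ through any IC constraint. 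As noted in the remark immediately before the lemma, the $\zeta_t$-term enters the principal's objective only through the It\^o integral $-\zeta_t^{\mathsf T}\Sigma_t(Y_t)dB_t$, whose expectation vanishes, so setting $\zeta_t\equiv 0$ is optimal for contract simplicity without affecting cost.

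Next, I would observe that, once $\zeta_t$ is dropped, $E_t$ plays exactly the role that $R_t^{-1}\zeta_t$ played in the asymmetric-information problem of Section \ref{LQ_case_section}. The intermediate payment $p_t$ enters the cost linearly with coefficient $\delta_P-\delta_A-e^{-r(T-t)}$ and is otherwise decoupled from $Y_t$ and $E_t$, so the same case analysis on $\delta_P-\delta_A$ yields the practical regime $p_t^b\equiv 0$. The remaining problem is the stochastic optimal control program
\begin{align*}
\min_{E_t\in\mathcal{E}} \ \mathbb{E}\int_0^T e^{-rt}\Big(\rho^{\mathsf T}Y_t+\tfrac12 E_t^{\mathsf T}R_t E_t\Big)dt + e^{-rT}\rho^{\mathsf T}Y_T,\\
\text{subject to } dY_t=(AY_t-E_t)dt+D_t\cdot\mathrm{diag}(Y_t)dB_t,
\end{align*}
which is structurally identical to the reduced problem solved in the proof of Theorem \ref{optimal_zeta_LQ} after the substitution $E_t\leftrightarrow R_t^{-1}\zeta_t$. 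Hence the same HJB ansatz $V_p(t,Y)=K_t^{\mathsf T}Y+m_t$ goes through, the Riccati coefficient vanishes ($S_t\equiv 0$ from the zero terminal condition), and one obtains $E_t^b=R_t^{-1}K_t$ with $K_t$ satisfying \eqref{sol_K}.

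Finally, to obtain the cumulative payment dynamics, I would substitute into \eqref{dc_simplified} using $h_A(M_t)=-M_t$ (so $h_A'=-1$, $h_A''=0$), $\zeta_t=0$, $p_t=0$, and $E_t^*=E_t^b=R_t^{-1}K_t$. Since $M_t=c_t$ when $p_t\equiv 0$, the drift collapses to $rc_t^b+\tfrac12(E_t^b)^{\mathsf T}R_t E_t^b=rc_t^b+\tfrac12 K_t^{\mathsf T}R_t^{-1}K_t$, and the diffusion vanishes because $\zeta_t=0$. This recovers \eqref{lem_LQ_team_ct}. The initial condition $c_0^b=-\underline{J}_A>0$ comes from the IR equality $h_A(M_0)=-c_0^b=\underline{J}_A$ and the sign convention on $\underline{J}_A$.

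The only delicate step is the reduction that makes $\zeta_t\equiv 0$ optimal: one must verify that the $\zeta_t$-dependent martingale increment truly leaves the principal's expected cost unchanged and that setting $\zeta_t=0$ is consistent with a well-defined, non-anticipative contract in \eqref{dc_simplified}. Once this is established the rest is a direct pattern-match to Theorem \ref{optimal_zeta_LQ}, so I expect the proof to be short and essentially a corollary, as the lemma's placement suggests.
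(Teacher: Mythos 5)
Your proposal is correct and follows essentially the same route as the paper, which simply invokes Theorem \ref{optimal_zeta_LQ} and substitutes $\zeta_t=0$, $p_t=0$, $h_A(M)=-M$ into \eqref{dc_simplified}; your version just writes out the substitution $E_t\leftrightarrow R_t^{-1}\zeta_t$, the vanishing of the $\zeta_t$-dependent It\^o integral in expectation, and the identification $M_t=c_t$ explicitly. The arithmetic checks out, so this is a faithful expansion of the paper's one-line argument.
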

\begin{proof}
The result follows immediately from Theorem \ref{optimal_zeta_LQ} and \eqref{dc_simplified} with $\zeta_t=0$. \qed
\end{proof}

The following lemma provides a mechanism that leads to implementation of the team optimal solution presented in Lemma \ref{lem_LQ_team} without forcing the agent to follow $E_t^b$.

\begin{lemma}
In the LQ setting, the implementable optimal dynamic contract designed by the principal under full information is
\begin{equation}
dc_t = \left( rc_t- \frac{1}{2}K_t^{\mathsf{T}}R_t^{-1}K_t + K_t^{\mathsf{T}}E_t\right) dt,
\end{equation}
with $c_0=-\underline{J}_A>0$ and $K_t$ given by \eqref{sol_K}. The intermediate payment is zero, and the agent's best response is $E_t = R_t^{-1}K_t$.
\end{lemma}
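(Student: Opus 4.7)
The plan is to construct a deterministic, linear-in-effort contract that induces the team-optimal effort as the agent's unique best response. I would start from the output-based asymmetric-information contract of Lemma~\ref{lem_2}, convert its $dY_t$-dependent increment into an $E_t$-dependent increment via \eqref{cyber_risk}, drop the resulting zero-mean martingale (permissible under full information since the principal can set $\zeta_t=0$), and then verify each of the three assertions by pointwise optimization.

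First I would substitute $dY_t - AY_t\, dt = -E_t\, dt + \Sigma_t(Y_t)\, dB_t$ into the contract of Lemma~\ref{lem_2}, yielding
\[
dc_t = \bigl(rc_t - \tfrac{1}{2}K_t^{\mathsf{T}}R_t^{-1}K_t + K_t^{\mathsf{T}}E_t\bigr)dt - K_t^{\mathsf{T}}\Sigma_t(Y_t)\, dB_t.
\]
Under full information the effort is directly observable, so the dependence on $dY_t$ can equivalently be rewritten as an $E_t$-linear incentive. Invoking the same argument used in the LQ benchmark discussion of this section, $\zeta_t$ enters the principal's problem only through the zero-expectation integral $\int_0^T e^{-rt}\zeta_t^{\mathsf{T}}\Sigma_t(Y_t)\, dB_t$, so one can take $\zeta_t\equiv 0$ without loss of optimality. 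This eliminates the $dB_t$ term and produces the deterministic contract claimed in the lemma. The vanishing intermediate payment $p_t \equiv 0$ is inherited from the parameter regime identified in Section~\ref{LQ_case_section}, which carries over verbatim since the treatment of $p_t$ in the principal's objective is the same under full information.

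Second, to verify that the agent's best response is $E_t = R_t^{-1}K_t$, I would integrate the scalar linear ODE for $c_t$ to obtain
\[
e^{-rT}c_T = c_0 + \int_0^T e^{-rt}\bigl(K_t^{\mathsf{T}}E_t - \tfrac{1}{2}K_t^{\mathsf{T}}R_t^{-1}K_t\bigr)dt,
\]
using $c_0 = -\underline{J}_A$. Since $p_t \equiv 0$ implies $M_T = c_T$ and $h_A(M_T) = -c_T$, the agent's expected cost collapses to
\[
\mathbb{E}\int_0^T e^{-rt}\Bigl(\tfrac{1}{2}E_t^{\mathsf{T}}R_tE_t - K_t^{\mathsf{T}}E_t + \tfrac{1}{2}K_t^{\mathsf{T}}R_t^{-1}K_t\Bigr)dt + \underline{J}_A.
\]
The integrand is strictly convex in $E_t$ because $R_t \succ 0$, and pointwise minimization via the first-order condition $R_t E_t - K_t = 0$ delivers the unique best response $E_t = R_t^{-1}K_t$. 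This coincides with the team-optimal effort $E_t^b$ of Lemma~\ref{lem_LQ_team}, and substituting it back reproduces the drift $(rc_t + \tfrac{1}{2}K_t^{\mathsf{T}}R_t^{-1}K_t)dt$ of \eqref{lem_LQ_team_ct}, confirming that the mechanism implements the team optimum while leaving the agent's choice unconstrained.

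The main obstacle is justifying the passage to a purely deterministic contract. Although the earlier remark motivates $\zeta_t = 0$ on complexity grounds, one must verify that this choice neither alters the agent's incentives nor violates the IR constraint. Both conditions hold in the present LQ setting precisely because $h_A$ is linear, so the $dB_t$ term contributes a pure mean-zero martingale to $e^{-rT}c_T$ that drops out of the agent's expected cost. Outside the LQ/linear-$h_A$ regime this simplification would fail and the implementable contract would retain a genuine stochastic sensitivity; the argument is therefore specifically an LQ calculation rather than a generic reduction.
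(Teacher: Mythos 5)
Your proof is correct and arrives at the same contract and the same best-response characterization as the paper, but the route differs in two respects worth noting. The paper simply \emph{posits} an affine-in-effort contract $dc_t = \bigl(rc_t + \tfrac{1}{2}K_t^{\mathsf{T}}R_t^{-1}K_t\bigr)dt + \Gamma_t^{\mathsf{T}}\bigl(E_t - R_t^{-1}K_t\bigr)dt$ with an undetermined penalty gain $\Gamma_t$, solves the agent's HJB equation with $V_a(t,c_t)=-c_t$ to get $E_t^o = R_t^{-1}\Gamma_t$, and then sets $\Gamma_t = K_t$; you instead \emph{derive} the contract form by substituting the risk dynamics into the moral-hazard contract of Lemma~\ref{lem_2} and stripping the resulting mean-zero $dB_t$ term, and you verify the best response by integrating the now-deterministic ODE for $c_t$ and minimizing the agent's cost pointwise in $E_t$ using strict convexity of $\tfrac{1}{2}E_t^{\mathsf{T}}R_tE_t - K_t^{\mathsf{T}}E_t$. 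Your derivation buys a cleaner explanation of \emph{where} the full-information contract comes from (it is the asymmetric-information contract with the $dY_t$-sensitivity replaced by the now-observable $E_t$-sensitivity and the noise removed), and your verification is more elementary than the HJB route since the agent's problem decouples pointwise; you are also right, and more explicit than the paper, that linearity of $h_A$ is exactly what makes dropping the martingale term incentive- and IR-neutral, so the reduction is specific to this regime. Both arguments confirm that substituting $E_t = R_t^{-1}K_t$ recovers the team-optimal drift of \eqref{lem_LQ_team_ct}.
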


\begin{proof}
Similar to the methodologies proposed in \cite{bacsar1984affine,cansever1985stochastic,bacsar1989stochastic}, we let the contract take the following form:
\begin{equation}\label{implementable_ct}
dc_t = \left( rc_t+ \frac{1}{2}K_t^{\mathsf{T}}R_t^{-1}K_t \right) dt + \Gamma_t^{\mathsf{T}}(E_t-R_t^{-1}K_t)dt,
\end{equation}
where $\Gamma_t$ is an $N$-dimensional vector to be determined. The second term $\Gamma_t^{\mathsf{T}}(E_t-R_t^{-1}K_t)dt$ is introduced to penalize the agent when his action deviates from $R_t^{-1}K_t$. The agent solves his problem by responding to this announced contract from the principal. Similar to $\mathrm{(O-A')}$ and using $V_a(t,c_t) = h_A(c_t)=-c_t$, we obtain the corresponding HJB equations as
\begin{equation*}
\begin{split}
\min_{E_t}\left[ \frac{\partial V_a}{\partial c_t}\left( rc_t + \frac{1}{2}K_t^{\mathsf{T}}R_t^{-1}K_t + \Gamma_t^{\mathsf{T}}(E_t-R_t^{-1}K_t) \right) + f_A(t,p_t,E_t) \right]
+\frac{\partial V_a}{\partial t} = rV_a,\\
V_a(T,c_T) = -c_T.
\end{split}
\end{equation*}
The optimal solution of the agent is achieved at $$E_t^o = {\arg\min}_{E_t} - \Gamma_t^{\mathsf{T}} E_t + \frac{1}{2}E_t^{\mathsf{T}}R_tE_t,$$ which yields $E_t^o = R_t^{-1}\Gamma_t$. Based on Lemma \ref{lem_LQ_team}, we choose $\Gamma_t = K_t$, and thus the agent implements the team optimal solution $E_t^b$. Further, \eqref{implementable_ct} degenerates to the one in \eqref{lem_LQ_team_ct}. \qed
\end{proof}

\textit{Remark:} In the LQ setting under full information and incomplete information, the optimal contract and the manager's behavior do not relate to the risk volatility $\Sigma_t(Y_t)$ of the network. The reason is that the cost function of the principal is linear in the systemic risk $Y_t$. Hence, the expectation of the risk volatility term is zero, and $\Sigma_t(Y_t)$ does not play a role in the optimal dynamic contract. This fact in turn corroborates the zero information rent in the LQ setting due to the removal of risk uncertainty.

A more general class of scenarios satisfying the certainty equivalence principle that leads to zero information rent is summarized as follows.

\begin{corollary}
When $f_P(t,\phi,p_t)$, $h_p(\phi)$ and $h_A(\phi)$ are linear in the argument $\phi$, then $I_R=0$, where the optimal contracts under the full information and incomplete information coincide.
\end{corollary}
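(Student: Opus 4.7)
The plan is to show that, under the three linearity hypotheses, both $\mathrm{(O-B')}$ and $\mathrm{(O-P')}$ reduce to the very same optimization over $(p_t, E_t)$, so their optimal costs must agree.

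First I would use linearity of $h_A$ to write $h_A^{-1}(\phi) = \alpha \phi + \beta$ with $\alpha > 0$. Integrating the dynamics of $h_t$ yields
\[
e^{-rT} h_T = \underline{J}_A - \int_0^T e^{-rs} f_A(s,p_s,E_s^*)\,ds + \int_0^T e^{-rs} \zeta_s^{\mathsf{T}} \Sigma_s(Y_s)\,dB_s,
\]
so under the boundedness assumption $\int_0^T \Vert\Sigma_t(Y_t)\mathbf{1}_N\Vert^2 dt \leq C_1$ and standard $L^2$-admissibility of $\zeta_t$, the stochastic integral is a true martingale with zero mean. Linearity of $h_A$ then lets the expectation pass through, giving
\[
\mathbb{E}\bigl[e^{-rT} h_A^{-1}(h_T)\bigr] = \alpha\,\underline{J}_A - \alpha\,\mathbb{E}\int_0^T e^{-rs} f_A(s,p_s,E_s^*)\,ds + \beta e^{-rT},
\]
and the same substitution simplifies $\mathrm{(O-B')}$'s objective identically. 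The remaining terms $\mathbb{E}\int e^{-rt} f_P(t,Y_t,p_t)\,dt$ and $\mathbb{E}[e^{-rT} h_P(Y_T)]$ are common to both problems and depend on the controls only through $p_t$ and through $Y_t$'s dynamics driven by $E_t$; here the linearity of $f_P$ in $Y_t$ and of $h_P$ in $Y_T$ keeps the objective manifestly free of any residual $\zeta_t$-dependence via curvature effects. The net effect is that $\zeta_t$ has disappeared from both reduced expected costs.

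Second, I would appeal to the strict convexity of $f_A$ in $E_t$ from Assumption~\ref{Assump_1} and condition \textbf{(S1)}: the IC equation $E_t^* = {\arg\max}_{E_t}\,\zeta_t^{\mathsf{T}} E_t - f_A(t,p_t,E_t)$ admits the explicit inverse $\zeta_t = \nabla_{E_t} f_A(t,p_t,E_t^*)$, establishing a bijection between admissible $\zeta_t$ and admissible $E_t^*$ for each fixed $p_t$. Reparameterizing $\mathrm{(O-P')}$ by $(p_t, E_t^*)$ rather than by $(p_t, \zeta_t)$ then yields exactly $\mathrm{(O-B')}$ with $E_t^*$ playing the role of the free control $E_t$. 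Hence the two problems share the same attainable set and objective, so their optimal costs coincide and $I_R = 0$; the incentive-compatible contract recovering a team-optimal $E_t^b$ is obtained by $\zeta_t^* = \nabla_{E_t} f_A(t,p_t^b,E_t^b)$ with $p_t^*$ and $c_t^*$ then read off from \eqref{dc_simplified}.

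The principal technical obstacle is the surjectivity of this IC map onto the set of candidate team-optimal efforts: strict convexity of $f_A$ supplies injectivity immediately, but if $E_t^b$ were to lie on the boundary of $\mathcal{E}$, the first-order condition becomes a variational inequality and the corresponding $\zeta_t$ need not exist. I would handle this by restricting attention to regimes where $E_t^b$ is interior to $\mathcal{E}$ --- which is the case in the LQ specialization of Section~\ref{LQ_case_section}, since there $E_t^b = R_t^{-1} K_t$ is finite and $\mathcal{E}$ can be taken large enough --- or equivalently by requiring that the range of $\nabla_{E_t} f_A(t,p_t,\cdot)$ cover the admissible effort set. The only other care point is verifying the true (rather than local) martingale property of $\int_0^\cdot e^{-rs} \zeta_s^{\mathsf{T}} \Sigma_s(Y_s)\,dB_s$, which follows from the stated bound on $\Sigma_t$ combined with admissibility of $\zeta_t$ and poses no further difficulty.
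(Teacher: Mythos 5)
Your proof is correct and follows essentially the same route as the paper, which justifies the corollary with a single sentence asserting that linearity removes the effect of risk uncertainties; you supply the two ingredients that sentence compresses, namely that linearity of $h_A$ makes $\mathbb{E}[e^{-rT}h_A^{-1}(h_T)]$ depend only on the zero-mean stochastic integral's drift so that $\zeta_t$ drops out of the objective, and that the IC first-order condition gives a bijection between $\zeta_t$ and interior $E_t^*$ so that $\mathrm{(O-P')}$ reparameterizes into $\mathrm{(O-B')}$ (the same device the paper uses in its LQ certainty-equivalence lemma). Your explicit flagging of the surjectivity/boundary caveat for $E_t^b\in\partial\mathcal{E}$ is a detail the paper glosses over, and your argument is sound.
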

\begin{proof}
The linearity of functions removes the effects of risk uncertainties on the performance of the principal and the agent which leads to a zero information rent. \qed
\end{proof}

\section{Case Studies}\label{examples}
We demonstrate, in this section, the optimal design principles of dynamic contracts for systemic cyber risk management of enterprise networks through examples. Specifically, we first utilize a case study with one node to show that the dynamic contracts can successfully mitigate the systemic risk in a long period of time. Then, we investigate an enterprise network with a set of interconnected nodes to reveal the network effects in systemic risk management through dynamic contracts and discover a distributed way of mitigating the systemic risks.

\subsection{One-Node System Case}
First, we consider a one-dimensional case in which the enterprise network contains only one node, i.e., $Y_t$ is a scalar. Therefore, the risk manager protects the system by directing the security resources to this node. Note that for the LQ setting, 
the coupled ODEs in Theorem \ref{optimal_zeta_LQ} admit the unique solutions:
\begin{align}
K_t = \frac{\rho}{A-r}\left( (A-r+1)e^{(A-r)(T-t)} -1\right),\ 
m_t = \frac{K_t^2}{2rR_t}\left( e^{-r(T-t)}-1\right).
\end{align}
Therefore, based on Lemma \ref{lem_2}, the optimal effort of the risk manager is 
\begin{equation}
E_t^* = R_t^{-1}\zeta_t^*=\frac{\rho}{R_t(A-r)}\left( (A-r+1)e^{(A-r)(T-t)} -1\right),
\end{equation}
and the optimal compensation becomes
\begin{equation}
dc_t = \left( rc_t- \frac{K_t^2}{2R_t} + AK_tY_t\right) dt -  K_tdY_t,\ c_0=-\underline{J}_A.
\end{equation}

If the risk manager accepts this optimal contract, then the principal's excepted minimum cost is equal to 
$
J_P^*=K_0^{\mathsf{T}}y_0+m_0 - \underline{J}_A.
$

To illustrate the optimal mechanism design, we choose specific values for the parameters in Section \ref{LQ_case_section}: $\rho = 5\ \mathrm{k}\$/\mathrm{unit}$, $r=0.3$, $R_t=1.5\ \mathrm{k}\$/\mathrm{unit}^2$, $T=1$ year, $y_0=5$ unit, and $\underline{J}_A=-10\ \mathrm{k}\$$. Figure \ref{example_1d} shows the results for varying values of the parameter $A$. Note that a single node system with a larger $A$ indicates that it is more vulnerable and harder to mitigate the cyber risk. From Fig. \ref{example_1d}, we find that with a larger $A$, the system requires more effort from the risk manager to bring the cyber risk down to a relatively low level. In all cases, the effort decreases as time increases, and finally converges to a positive constant $\frac{\rho}{R_t}$. This phenomenon indicates that when the system risk is high, the agent should spend more effort in risk management. When the risk is reduced to a relatively low level and the system becomes secure, then less effort is preferable as the risk will not grow. In addition, the corresponding terminal compensation $c_T$ increases with the amount of effort spent.

\begin{figure}[t]
  \centering
  \subfigure[Effort]{
    \includegraphics[width=0.48\columnwidth]{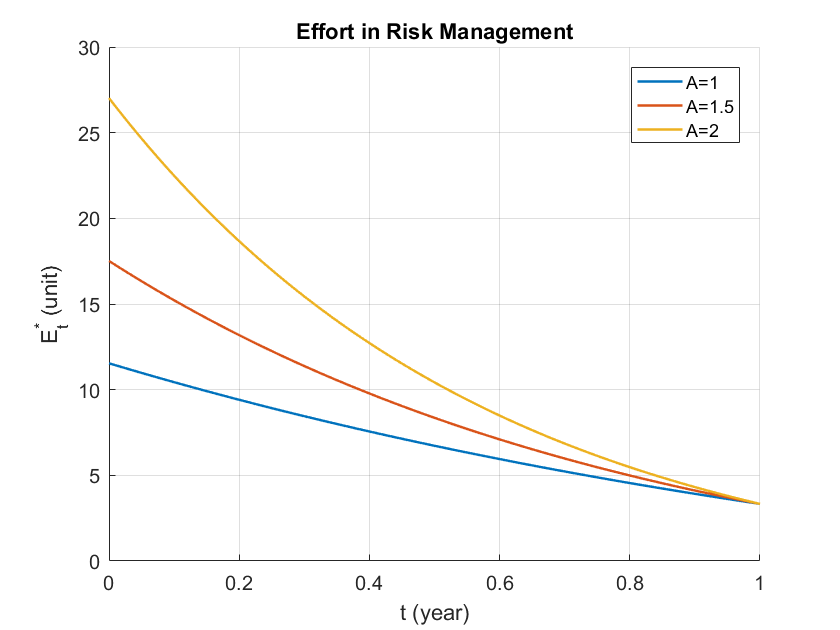}}
      \subfigure[Systemic cyber risk]{
    \includegraphics[width=0.48\columnwidth]{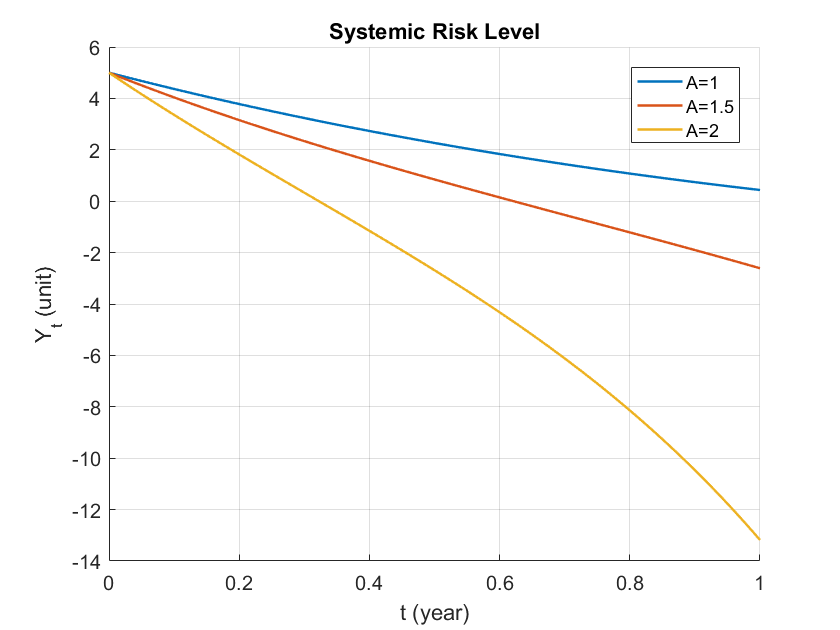}}
    	 \subfigure[Cumulative payment]{
    \includegraphics[width=0.5\columnwidth]{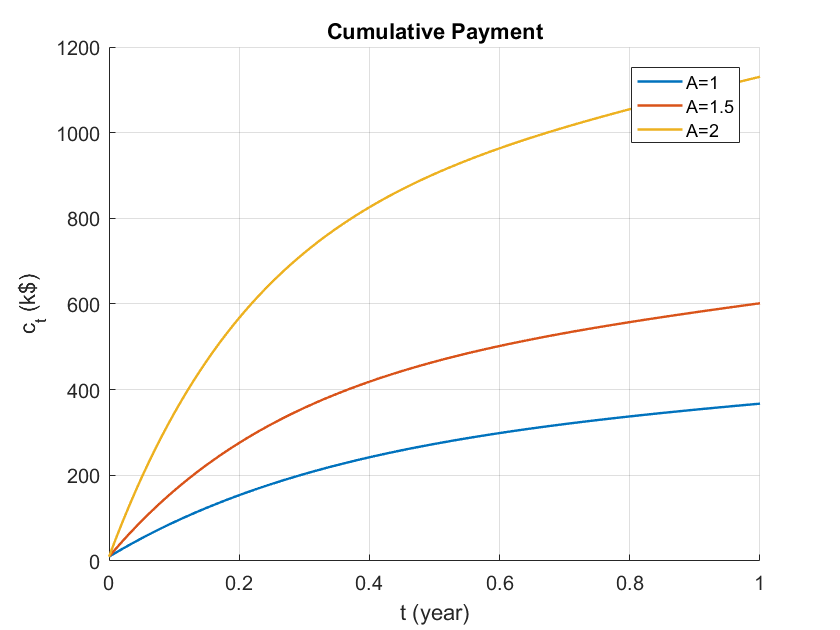}}
  \caption[]{(a), (b), and (c) show the effort,  the cyber risk and the terminal payment under the optimal contract. The terminal compensation $c_T$ increases with the spent effort of the risk manager.}
  \label{example_1d}
\end{figure}

\begin{figure}[t]
  \centering
  \subfigure[Effort]{
    \includegraphics[width=0.48\columnwidth]{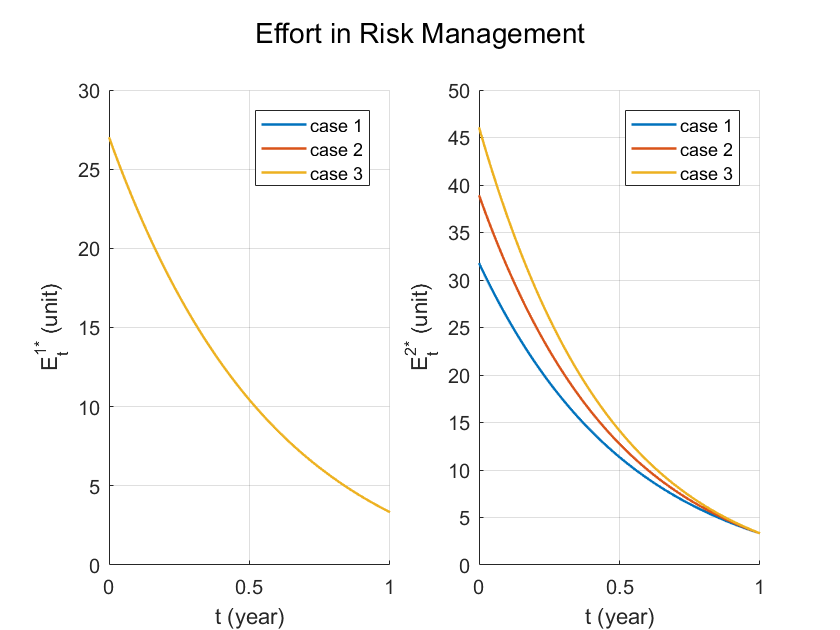}}
      \subfigure[Systemic cyber risk]{
    \includegraphics[width=0.48\columnwidth]{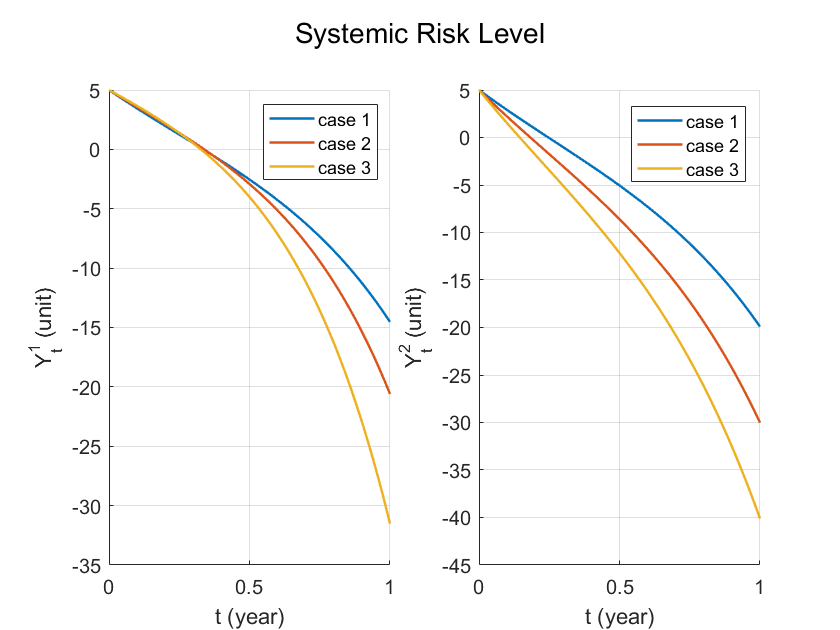}}
    	 \subfigure[Cumulative payment]{
    \includegraphics[width=0.5\columnwidth]{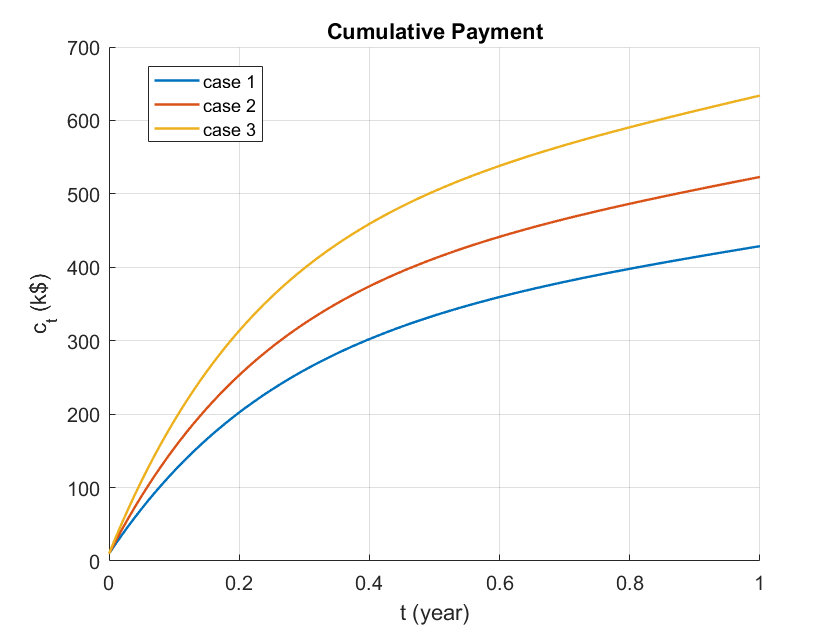}}
  \caption[]{(a), (b), and (c) show the effort, the systemic risk and the terminal payment under the optimal contract. Case 1: $A = [2,0.2;0,2]$; Case 2: $A = [2,0.5;0,2]$; Case 3: $A = [2,0.8;0,2]$. A higher network connectivity requires more effort to mitigate the systemic cyber risk.}
  \label{example_2d_1}
\end{figure}

\subsection{Network Case}
We next investigate cyber risk management over enterprise networks and characterize the interdependencies between nodes. The unique solutions to the ODEs in Theorem \ref{optimal_zeta_LQ} are then as follows:
\begin{align}
K_t &= \rho\left[ (A-rI)^{\mathsf{T}} \right]^{-1}\left(\left((A-rI)^{\mathsf{T}}+I\right)e^{(A-rI)^{\mathsf{T}}(T-t)}-I\right),\\
m_t &= \frac{K_t^{\mathsf{T}}R_t^{-1}K_t}{2r}\left( e^{-r(T-t)}-1 \right),
\end{align}
The optimal effort of the risk manager is 
\begin{align*}
E_t^* =R_t^{-1}\rho\left[ (A-rI)^{\mathsf{T}} \right]^{-1}\left(\left((A-rI)^{\mathsf{T}}+I\right)e^{(A-rI)^{\mathsf{T}}(T-t)}-I\right),
\end{align*}
and the optimal compensation follows \eqref{dc_lem_2}.

We first consider a cyber network containing two connected nodes. The system parameters are chosen as $\rho = [5;5] \ \mathrm{k}\$/\mathrm{unit}$, $r=0.3$, $R_t=[1.5,0;0,1.5]\ \mathrm{k}\$/\mathrm{unit}^2$, $T=1$ year, $y_0=[5;5]$ unit, and $\underline{J}_A=-10\ \mathrm{k}\$$. Moreover, we compare three scenarios in terms of network interdependencies. Specifically, we have case 1: $A = [2,0.2;0,2]$, case 2: $A = [2,0.5;0,2]$, and case 3: $A = [2,0.8;0,2]$. Figure \ref{example_2d_1} shows the results, where we denote by $E_t^{i*}$ and $Y_t^{i}$ the effort and the corresponding risk of node $i$, $i=1,2$, respectively. Similar to the single-node case, both the effort and systemic risk decrease over time. Specifically, the dynamic effort converges to $R_t^{-1}\rho$ which can be verified directly by the analytical expression. Comparing $E_t^{1*}$ with $E_t^{2*}$, we find that the risk manager should spend more effort on the nodes which can heavily influence other nodes. Even though there is no risk influence from node 1 to node 2, the optimal effort $E_t^{2*}$ increases as the influence strength becomes larger from node 2 to node 1. This phenomenon is consistent with the idea of \textit{controlling the origin} to constrain the propagation of cyber risks. Furthermore, the value of $E_t^{2*}$ indicates that a higher network connectivity requires more effort to mitigate the systemic cyber risk.

We next investigate a 4-node system where the network structures are shown in Fig. \ref{4_node}. The system parameters are the same as those in the 2-node case except for the matrix $A$. The diagonal entries in $A$ are all equal to 2 and the off-diagonal entries that correspond to a link are all equal to 0.2.  Figure \ref{example_4d} shows the results under the optimal mechanism. The risk manager spends more effort on node 1 in cases 2 and 3 than in case 1, as the risk of node 1 can propagate to node 4 in the former two cases. Another key observation is that the amount of allocated effort on each node mainly depends on its risk influences on other nodes rather than on the exogenous risks (node's outer degree), yielding a \textit{self-accountable} risk mitigation scheme. For example, even though node 4 impacts node 2 in case 3, the risk management efforts on node 2 are close in cases 2 and 3. A similar pattern can be seen on node 4 in cases 1 and 2. This observation provides a distributed method of risk management which reduces the complexity of decision-making by simplifying the network structures and classifying the nodes based on their outer degrees.  By comparing three cases, we also conclude that more complex cyber interdependencies induce higher cost on the principal in the security investment.

\begin{figure}[t]
  \centering
    \includegraphics[width=.7\columnwidth]{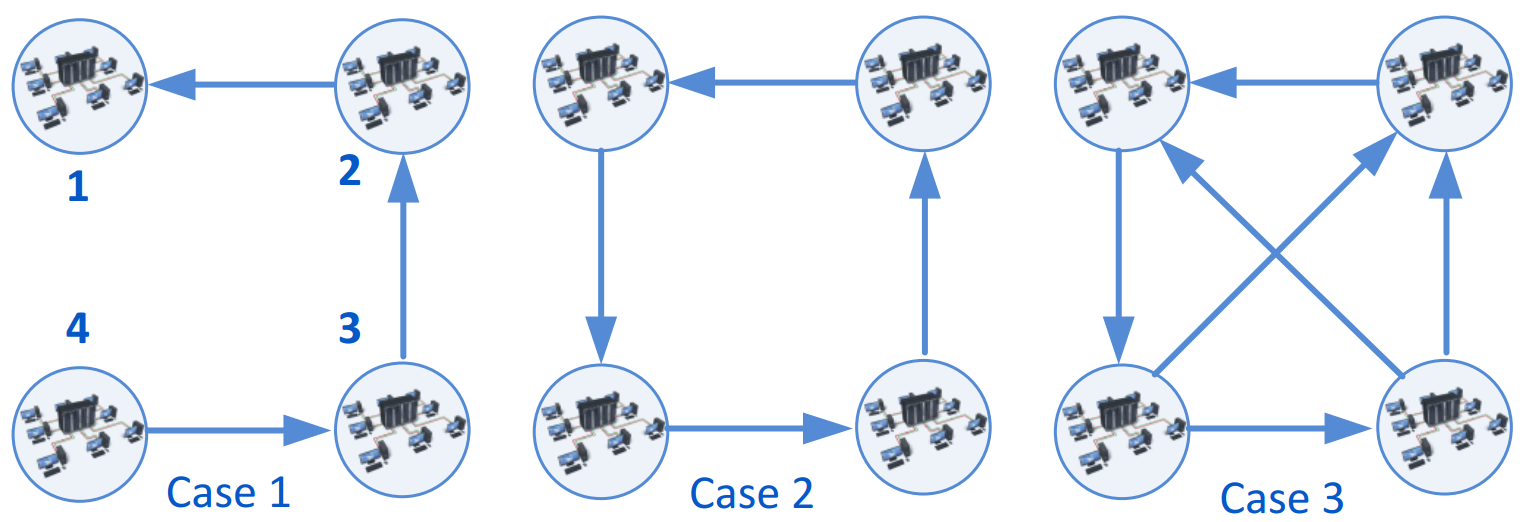}
  \caption[]{Three different structures of enterprise network. The risk influence strengths are the same, admitting a value of 0.2 in matrix $A$. }
  \label{4_node}
\end{figure}

\begin{figure}[t]
  \centering
  \subfigure[Effort]{
    \includegraphics[width=0.47\columnwidth]{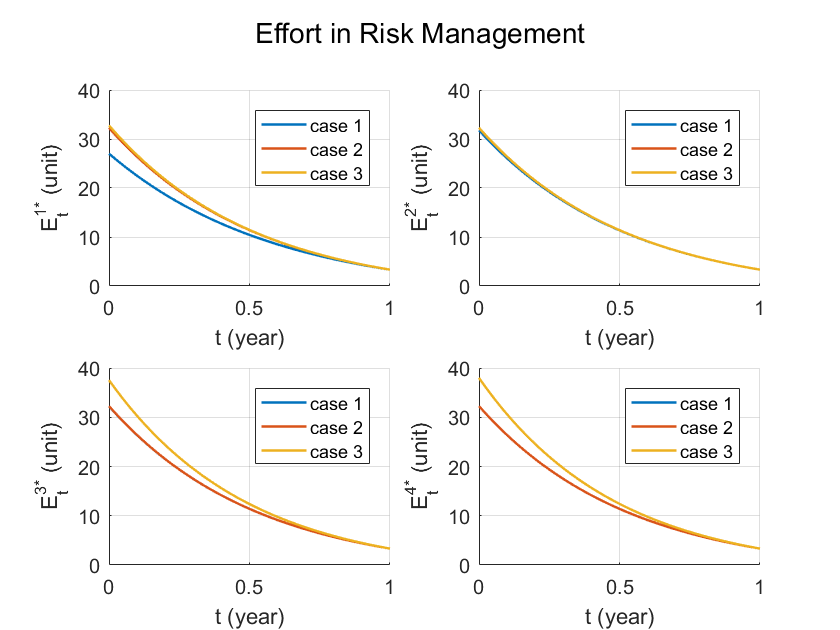}}
      \subfigure[Systemic cyber risk]{
    \includegraphics[width=0.47\columnwidth]{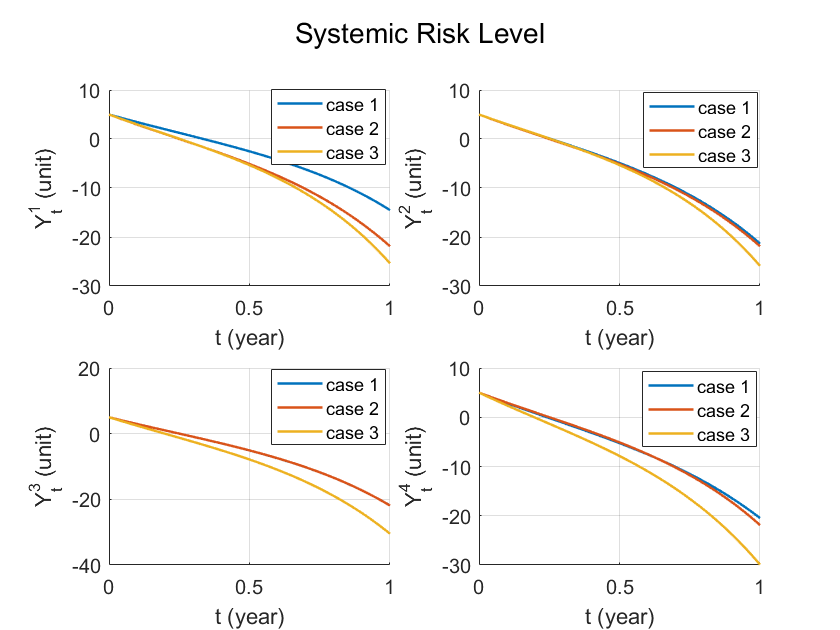}}
    	 \subfigure[Cumulative payment]{
    \includegraphics[width=0.5\columnwidth]{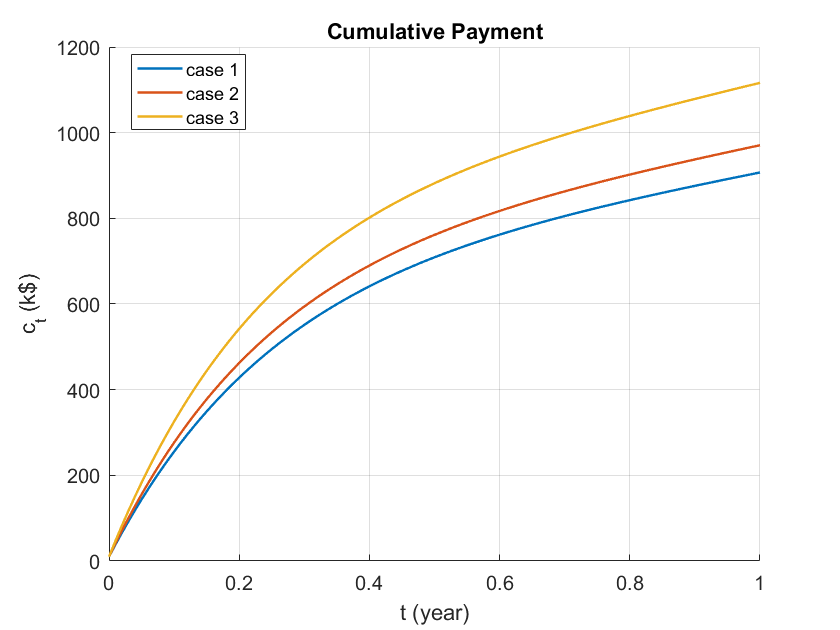}}
  \caption[]{(a), (b), and (c) show the effort, the systemic risk and the terminal payment under the optimal contract. Each node is self-accountable for its risk influence on others.}
  \label{example_4d}
\end{figure}

\begin{figure}[!ht]
  \centering
  \subfigure[]{
    \includegraphics[width=0.45\columnwidth]{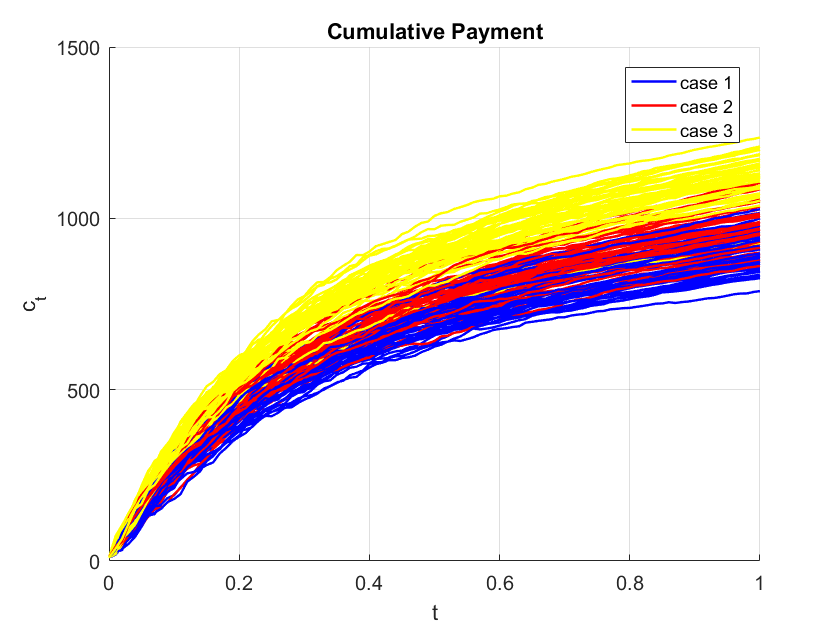}\label{payment_risk_1}}
      \subfigure[]{
    \includegraphics[width=0.45\columnwidth]{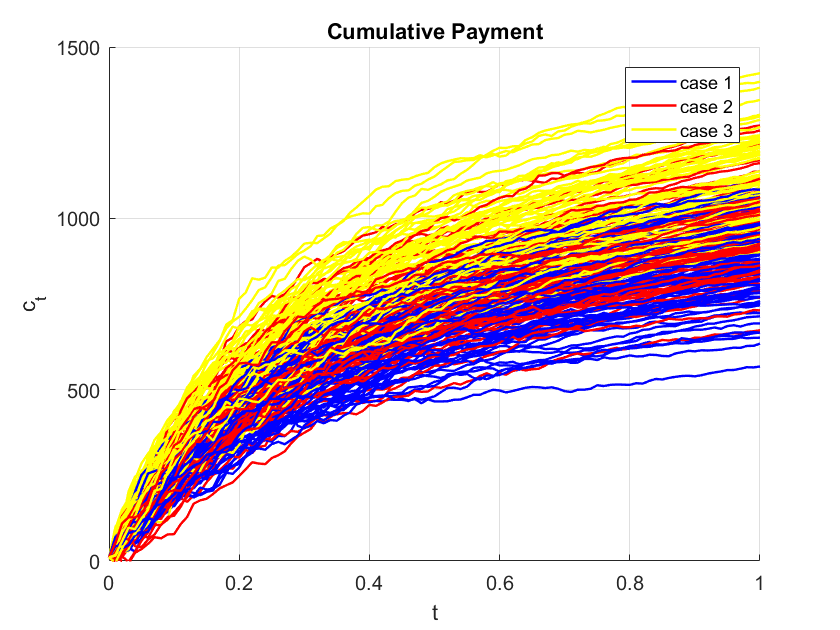}\label{payment_risk_2}}
  \caption[]{(a) and (b) depict the optimal terminal payment under different risk volatility structure. The risk volatility of nodes is independent in (a), while the influence  of risk volatility in (b) admits a cycle structure as case 2 in Fig. \ref{4_node}. The results indicate that a larger interdependency of cyber risk volatility yields compensation schemes with a larger variance. }
  \label{payment_risk}
\end{figure}

Note that in the above case studies, all variables were evaluated under the expectation with respect to the cyber risk uncertainty. As shown in Corollary \ref{volatility_impact}, even though the expected compensation is independent of the network risk uncertainty, the actual compensation during contract implementation is influenced by the volatility term $\Sigma_t(Y_t)$. We present two scenarios in Fig. \ref{payment_risk}, where Fig. \ref{payment_risk_1} and Fig. \ref{payment_risk_2} are the compensation realizations under $\Sigma_t(Y_t) = I$ and $\Sigma_t(Y_t) = [1,1,0,0;0,1,1,0;0,0,1,1;1,0,0,1]$, respectively. When the nodes' risks face more sources of uncertainties in Fig. \ref{payment_risk_2}, the corresponding payment exhibits a larger variance comparing with the one in Fig. \ref{payment_risk_1}, which is consistent with the result of Corollary \ref{volatility_impact}.

\section{Conclusion}\label{conclusion}
In this paper, we have addressed the problem of  dynamic systemic cyber risk management of enterprise networks, where the principal provides contractual incentives to the manager, which include the compensations of direct cost of effort and indirect cost from risk uncertainties. This has involved a stochastic Stackelberg differential game with asymmetric information in a principal-agent setting.  Under the optimal incentive compatible scheme we have designed, the principal has rational controllability of the systemic risk where the suggested and adopted efforts coincide, and the risk manager's behavior is strategically neutral, depending only on the current net cost. Under mild conditions, we have obtained a separation principle where the effort estimation and the remuneration design can be separately achieved. We further have revealed a certainty equivalence principle for a class of dynamic mechanism design problems where the information rent is equal to zero. Through case studies, we have identified the network effects in the systemic risk management where the connectivity and node's outer degree play an important role in the decision making. Future work on this topic would consider cyber risk management of enterprise networks under Markov jump risk dynamics.

%\begin{acknowledgements}
%If you'd like to thank anyone, place your comments here
%and remove the percent signs.
%\end{acknowledgements}

% Authors must disclose all relationships or interests that 
% could have direct or potential influence or impart bias on 
% the work: 
%
% \section*{Conflict of interest}
%
% The authors declare that they have no conflict of interest.

% BibTeX users please use one of
%\bibliographystyle{spbasic}      % basic style, author-year citations
\bibliographystyle{spmpsci}      % mathematics and physical sciences
\bibliography{references}   % name your BibTeX data base

%\bibliographystyle{IEEEtran}
%\bibliography{IEEEabrv,references}

\end{document}